\newcommand{\A}{\mathscr{A}}
\newcommand{\T}{\top}
\newcommand{\Y}{\mathcal{Y}}
\def\p(#1|#2){p(#1\,|\,#2)}
\def\D(#1\|#2){D(#1 \  \| \ #2)}
\begin{document}

\title{SSUE: Simultaneous  State and Uncertainty Estimation for Dynamical Systems\protect\thanks{This work was partially sponsored by the U.S. Army Research Laboratory and was accomplished under Cooperative Agreement Number W911NF-17-2-0138.}}

\author[1]{Hang Geng}

\author[2]{Mulugeta A. Haile}

\author[1]{Huazhen Fang*}

\authormark{Geng, Haile and Fang}

\address[1]{\orgdiv{ Department of Mechanical Engineering}, \orgname{University of Kansas}, \orgaddress{\state{Lawrence KS 66045}, \country{USA}}}

\address[2]{\orgdiv{Vehicle Technology Directorate}, \orgname{US Army Research
Laboratory}, \orgaddress{\state{Aberdeen MD 21005}, \country{USA}}}

\corres{*Huazhen Fang, Department of Mechanical Engineering,
University of Kansas, Lawrence KS 66045, USA. \email{fang@ku.edu}}


\abstract[Summary]{Parameters of the mathematical model describing many practical dynamical systems are prone to vary due to aging or renewal, wear and tear, as well as changes in environmental or service conditions. These variabilities will adversely affect the accuracy of state estimation. In this paper, we introduce SSUE: Simultaneous State and Uncertainty   Estimation for quantifying parameter uncertainty while simultaneously estimating the   internal state of a system. Our approach involves the development of a Bayesian framework that recursively updates the posterior joint density of the unknown state vector and parameter uncertainty. To execute the framework for practical implementation, we develop a computational algorithm based on maximum a posteriori estimation and the numerical Newton's method. Observability analysis is conducted for linear systems, and its relation with the consistency of the estimation of the uncertainty's location is unveiled. Additional simulation results are provided to demonstrate the effectiveness of the proposed SSUE approach.}

\keywords{state estimation, uncertainty estimation,  nonlinear filtering, Bayesian framework, observability analysis}


\maketitle


\section{Introduction}

Many practical engineering systems require state observers that provide an estimate of the internal state of the system using a series of measurements of the input and output. For example, many control systems require state feedbacks to stabilize the system after disturbance. A state observer requires not only a series of sensor data, but also mathematical models describing the physical dynamical system as well as measurement models relating observed measurements to system state. Since the entire process is an approximation of quantities that typically are not directly observable, it entails uncertainties. In dynamic state estimation, uncertainties arise either from disturbance signals or as a result of the dynamic model perturbation. Disturbance signals such as measurement noise, faulty sensor, and random input-output fluctuations are well-known to affect the accuracy of state estimation. Uncertainties due to dynamic model perturbation are primarily due to the discrepancy between the mathematical model and the actual dynamics of the system.  Typical sources of the discrepancy include unmodelled (usually high-frequency) dynamics, neglected model nonlinearities, effects of model over-simplifications (such as deliberate reduced-order models), system-parameter variations (due to aging, wear and tear, and environmental changes), and malicious cyber attacks~\cite{Petersen:1999:RKF,Kim:Springer:2017,Chen:TSMC:2019,Ding:TII:2019}. Reliable estimation of state requires a thorough understanding of these sources of uncertainties as well as ways and means to quantify effects on the accuracy of the state observer. In this paper we attempt to address uncertainties arising from model perturbation; specifically we consider a nonlinear dynamical system whose model parameters are prone to perturbation due to changes in operational environment. Our goal is twofold: first we would like to identify and locate parameters of the mathematical model that are being perturbed, and second we would like to quantify the parameter uncertainties while simultaneously estimating the physical internal state of the dynamical system. Our work is motivated by the need to improve the accuracy of state estimation in practical dynamical systems whose model parameters are prone to changes due to aging or renewal as well as service conditions. Our end result is a robust framework for Simultaneous State and Uncertainty Estimation (SSUE). SSUE is a recursive Bayesian estimation framework with explicit routine for handling parameter uncertainty while simultaneously providing the best estimates of state vector.

\emph{Literature Review}: 
State estimation has been the subject of extensive study for over half a century. Most of the studies are aimed at improving the robustness of the estimation process so as to ensure a higher level of confidence in the estimation accuracy. The crowning achievement of research in this area is the Kalman filter (KF). Indeed,  if a   linear state estimation problem admits the assumption of Gaussian process and noise, then there is no better algorithm that can outperform KF.  However, the model uncertainty  will degrade  the performance of the KF in a variety of real world scenarios, and hence,  other techniques must be employed. Various studies~\cite{Xie:TAC:1994,Xie:SCL:1994,Yang:TAC:2002,Fu:TSP:2001,Petersen:1999:RKF,Bernstein:SCL:1989} present  modified KF method that achieve bounded-error state estimation for bounded model uncertainty. Others~\cite{Nagpal:TAC:1991,Geromel:SIAM:2000,Gao:TAC:2008, Li:AUTO:2012,Chen:TCSI:2002,Khargonekar:IJRNC:1996} present approaches for dealing with parameter uncertainty using robust $\mathcal{H}_\infty$ filtering and mixed $\mathcal{H}_2/\mathcal{H}_\infty$ filtering, which generally consider uncertainty bounded by a convex domain and then performs state estimation through numerical optimization such as convex programming. A somewhat different technique in~\cite{Durola:CDC:2008,Calafiore:AUTO:2004,Ghaoui:TAC:2001,Geromel:TSP:1999,Moheimani:TCSI:1998} implements robust state estimation based on set membership estimation and constrained optimization.

Most of the studies surveyed above often use worst-case model scenarios by using the upper bounds of the uncertainties; this is in addition to an already crude approximation  of uncertainty bounds resulting from an often poor prior knowledge. The end result is an excessively conservative state estimate that is practically unusable. Another approach of dealing with parameter uncertainty is to perform joint state and parameter estimation, which, in a departure from the above studies, seeks to estimate model parameters together with the state. State augmentation  represents a convenient and useful way to accomplish this task. Specifically, it augments the state of a system to include the unknown parameters and then applies a nonlinear KF to the resulting higher-dimensional nonlinear model for state estimation 
~\cite{Ching:PEM:2006,Chowdhary:AST:2010,Chen:JPC:2005}. Dual KF is an alternative to state augmentation, which employs two KFs to estimate the unknown state and parameters alternately, with the state and parameter estimation procedures going forward in parallel and mutually updating each other~\cite{Wan:Wiley:2001b,Wan:NIPS:2000,Vandyke:AAS:2004,Moradkhani:AWR:2005}. It is noted that the effect of parameter uncertainty, together with that of some other kinds of uncertainty (e.g., external disturbances or noises), can be modeled as an unknown input applied to a dynamical system. Then, a problem of much relevance is how to jointly estimate the state and unknown inputs, which has attracted a growing body of work. The study in~\cite{Mendel:TAC:1977} takes a lead with the development of a KF-based approach to estimate the state and external white process noise of a linear system. Recently, many methods have been proposed to handle arbitrary inputs through modifying some existing state estimation techniques. Among them are those based on the 
KF~\cite{Hsieh:TAC:2000,Hsieh:AJC:2010}, moving horizon estimation~\cite{Pina:AUTO:2006},
$\mathcal{H}_\infty$-filtering~\cite{You:AUTOSinica:2008}, sliding mode
observers~\cite{Floquet:IJACSP:2007,Bejarano:IJRNC:2007},  
minimum-variance unbiased estimation~\cite{ Gillijns:AUTO:2007:a, Fang:IJACSP:2011, Fang:AUTO:2012,Yong:AUTO:2016,Shi:AUTO:2016} and Bayesian filtering~\cite{Fang:AUTO:2013,Fang:ACSP:2015,Fang:CEP:2017}. Here, it must be pointed out that the uncertainty considered in these works takes a relatively restrictive form, since it appears in a definite manner as parameters or inputs in a system model. This leaves an open question on how to approach the scenario where the uncertainty's exact presence is unclear.

\emph{Statement of Contributions}: In this paper, we consider the problem of SSUE,  a challenge that has been rarely investigated in the literature and aims to determine not only the state of a system but also parameter uncertainty in terms of magnitude and location in a multi-parameter dynamical model. We focus on a specific SSUE problem, which features a coupling between the state and parameter uncertainty and assumes the uncertainty to belong to some candidate locations. Our study leads to the following contributions. First, it extends the powerful Bayesian approach to develop a Bayesian estimation framework to solve the SSUE problem. This framework includes two parts. The first part deals with estimation of state and uncertainty conditioned on the uncertainty`s location. Here we leverage the notion of Bayesian multi-model estimation to locate uncertainty. Our second contribution lies in the development of a computational SSUE algorithm, which integrates maximum a posteriori (MAP) estimation and numerical optimization to implement the proposed estimation framework. Finally, we conduct  theoretical analysis of the observability and consistency of the estimation of the uncertainty's location under the framework. We develop a definition of joint observability for SSUE and prove that the multi-model-based Bayesian estimation of the uncertainty's location is guaranteed to be consistent under certain conditions.

\emph{Organization}: The remainder of this paper is organized as follows. Section~\ref{Problem-Formulation} presents the formulation of the SSUE problem investigated in this work and the Bayesian estimation framework.  Section~\ref{Observability-Analysis}  defines observability for the SSUE problem and analyzes the consistency of the estimation of the uncertainty's location.   Section~\ref{SSUE-Algorithm} presents a computational implementation of the Bayesian framework by combining the MAP estimation and Newton's method. A simulation example is provided in Section~\ref{Simulation-Example} to evaluate the proposed SSUE algorithm. Finally, Section~\ref{Conclusion} provides concluding remarks. Most of the proofs are gathered in the Appendix.

\emph{Notation}: The notation used throughout the paper is fairly standard.
The superscripts `$\top$' and `$-1$' stand for matrix transposition and
inverse, respectively; $p$ denotes  the probability density function (pdf),
probability mass function (pmf), or mixed pdf-pmf for random variables or vectors, depending on the context; $%
\mathbb{R}$$^{n}$ denotes the $n$-dimensional Euclidean space; $\mathcal{N}%
\left( \bar{x},\Sigma \right) $ is a Gaussian function with mean $\bar{x}$
and covariance $\Sigma $; $\nabla _{x}y$ represents the derivative of $y$
with respect to $x$. Finally, ``a.s.'' denotes almost sure convergence.

\section{The SSUE Problem and Bayesian Framework}\label{Problem-Formulation}

Consider the following dynamical system subject to parameter uncertainty:
\begin{align}  
\left\{ \begin{aligned} x_{k+1} &=(A+\delta\mathscr{A})x_{k}+w _{k}, \\
y_{k} &= h( x_{k}) +v _{k}, \end{aligned}\right.  \label{mas1}
\end{align}%
where $x_{k}\in \mathbb{R}^{n}$ is the state of the system, $y_{k}\in \mathbb{R}%
^{p}$ is  the measurement,   $w
_{k}\in \mathbb{R}^{n}$ and $v _{k}\in \mathbb{R}^{p}$   are the process and measurement noises, respectively, and the nonlinear mapping $h: \mathbb{R}^n \rightarrow \mathbb{R}^p$ represents the measurement function. The initial state $x_0$ and the noises $w_k$ and $v_k$ are white Gaussian random vectors with zero mean  and covariances of $P_0>0$, $Q_k\geq 0$ and $R_k >0$, respectively, and are independent of each other.
 In addition,  $\delta  \in \Delta  $ represents  the parameter uncertainty, where $\Delta \subseteq \mathbb{R}$ can be  an interval or a union of multiple intervals in the   real axis.  The matrix $\mathscr{A}\in \mathbb{R}^{n\times n}$ associates $\delta$ with the affected parameters, which can be regarded as  a ``location matrix''.   
 For  simplicity and without loss of generality, we consider that the elements of $\A$ are either 0 or 1, with 1 imposing $\delta$ on the corresponding parameter in $A$. Even though $\A$ is unknown, it is reasonable to assume that some candidate locations of the uncertain parameters can be known \textit{a priori}. Hence, we can let $\A$ belong to a set of  location matrices, i.e., $\A \in \mathbb{A}= \{\A_i, i=1,2,\ldots,M\}$, where $M$ is the degree of the set. We denote $\mathbb{M} = \{1,2,\ldots,M\}$. An example of $\mathbb{A}$ for a two-dimensional system is $\mathbb{A} =   \{\A_1,\A_2,\A_3\}$, where
\begin{align*}
 \A_1=
\begin{bmatrix}
0 & 1 \\
0 & 0%
\end{bmatrix}%
, \ %
\A_2 = \begin{bmatrix}
0 & 0 \\
1 & 0%
\end{bmatrix}%
, \ %
\A_3 = \begin{bmatrix}
1 & 0 \\
0 & 1%
\end{bmatrix}.%
\end{align*}
 Note that another set $\mathbb{A}$ can be chosen if the \textit{a priori}  knowledge suggests other possible locations of the uncertainty.    The above system description can be used to capture parameter drift, model mismatch, and even disturbances on dynamics  due to cyber attacks to a cyber-physical system. 
Based on this, the objective of SSUE  is to simultaneously estimate $x_k$ and $\delta$ and determine $\A$ using the model in~\eqref{mas1} and the output  data stream $\mathcal{Y}_k=\{y_0, y_1, \ldots, y_k\}$.

\begin{remark}
The above describes  a basic model for the study of SSUE, with the process equation linear for zero $\delta$ and the measurement equation nonlinear. We consider this model for two reasons. First, it conveniently and explicitly shows the presence of parameter uncertainty in the system, in addition to its direct effect on the process evolution. This will help us expound the estimation design and analysis   in a clear-cut manner. Second, the process equation is still nonlinear due  the existence of the unknown $\delta$, making it not that different from a general nonlinear   equation, especially since SSUE design is our focus. 
The model formulation  also  readily allows for extension in different ways. First, we can generalize $\delta$ from a scalar to a vector and accordingly define a set of  location matrices for each element of a vector-based $\delta$. 
Second,   we can consider a nonlinear process equation  and the scenario when both the process and measurement equations are affected by uncertainty. In this case, we can still define a location set by making the   potential locations of the uncertainty in the nonlinear functions $f$ and $h$. When such a more sophisticated model is used, we can accommodate it by upgrading most of the developments that follow in this paper, particularly the designs of the Bayesian estimation  framework and computational algorithm.  
\end{remark}

We intend to lay out a Bayesian estimation framework for the SSUE problem described above. The Bayesian approach has been used to address  a wide range of estimation problems~\cite{Fang:JAS:2018}. The idea at its core is to sequentially update the probabilistic belief about an unknown variable given the measurement data stream. Here, we   build on this thinking to  develop the Bayesian  SSUE in two integral parts:
\begin{enumerate}

\item[1)] The first part is about inferring $x_k$ and $\delta$ given a possible location of the uncertainty, $\A_i$, and the measurement data $\Y_k$. This implies a need to update the   \textit{ a posteriori} pdf $\p(\delta,x_k|\A_i, \Y_k)$ for $i \in \mathbb{M}$. 

\item[2)] The second part  is concerned with determining $\A$ from among the candidates.  We   can consider updating the  pmf  $\p(\A=\A_i|\Y_k)$ (denoted as  $\p(\A_i|\Y_k)$ in sequel for notational conciseness) for $i \in \mathbb{M}$, with the hope that the probability associated with the true location of the uncertainty will surpass that of the other locations.  

\end{enumerate}
Based on 1) and 2), one can  further  exploit them to update $\p(\delta,x_k | \Y_k)$.  This procedure, as is observed, pertains to   multi-model   estimation, which has been significantly studied in the literature~\cite{Li-Jilkov:2005,Xu:TAES:2016}. Nonetheless, a conventional multi-model state estimation scheme is not applicable here, because of the existence of the state-coupled uncertainty. We, therefore, custom-design a  Bayesian estimation framework for SSUE, as  shown  in the following theorem.
\begin{theorem}\label{Bayesian-Paradigm}
For the system in~\eqref{mas1}, the  Bayesian paradigm for SSUE  is given by
\begin{subequations}\label{Bayesian-Update}
\begin{align}
 \p(\delta ,x_{k}|\mathscr{A}_{i},\mathcal{Y}_{k-1}) 
&=\int \p( x_{k}  | \delta ,x_{k-1}, \A_i ) \p( \delta ,x_{k-1} | 
\A_i,\mathcal{Y}_{k-1} ) dx_{k-1}, \label{mas6}\\
\p( \delta ,x_{k}  |   \mathscr{A}_{i},\mathcal{Y}_{k} )
& \varpropto \p( y_{k}  |  x_{k}  ) \p( \delta
,x_{k} \, | \, \mathscr{A}_{i},\mathcal{Y}_{k-1} ) ,  \label{mas7}\\
 \p(\mathscr{A}_{i}|\mathcal{Y}_{k})& =\frac{ \p(y_{k}|\mathscr{A}_{i},\mathcal{Y}_{k-1}) \p(\mathscr{A}_{i}|\mathcal{Y}_{k-1})}{\sum_{i=1}^{M} \p(y_{k}|\mathscr{A}_{i},\mathcal{Y}_{k-1}) \p(\mathscr{A}_{i}|\mathcal{Y}_{k-1})},
\label{mas5}\\
\p(\delta ,x_{k}|\mathcal{Y}_{k})& =\sum_{i=1}^{M} \p(\delta ,x_{k}|\mathscr{A}_{i},\mathcal{Y}_{k}) \p(\mathscr{A}_{i}|\mathcal{Y}_{k}), \label{mas4} 
\end{align}
\end{subequations}
for $i\in \mathbb{M}$.
\end{theorem}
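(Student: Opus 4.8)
The four identities in~\eqref{Bayesian-Update} are all consequences of elementary probability rules, so the plan is to make explicit the conditional-independence structure inherited from the model~\eqref{mas1}, then derive the within-hypothesis prediction and update, \eqref{mas6}--\eqref{mas7}, by Chapman--Kolmogorov marginalization and Bayes' rule conditioned on a fixed location matrix $\A_i$, and finally obtain the hypothesis update~\eqref{mas5} and the mixture~\eqref{mas4} by Bayes' rule and the law of total probability applied to the discrete variable $\A$. The structural fact doing the work is that, once $\A$ is fixed to $\A_i$, the pair $(\delta, x_k)$ is a Markov process in $k$: because $\delta$ is time-invariant and $x_k = (A+\delta\A_i)x_{k-1}+w_{k-1}$ with $w_{k-1}$ white and independent of the past, the one-step transition carries no dependence on the measurement history, i.e.
\[
\p(x_k | \delta, x_{k-1}, \A_i, \Y_{k-1}) = \p(x_k | \delta, x_{k-1}, \A_i),
\]
a density that is Gaussian in $x_k$ with mean $(A+\delta\A_i)x_{k-1}$ and covariance $Q_{k-1}$. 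Likewise, since $y_k = h(x_k)+v_k$ with $v_k$ white and independent of everything else, the measurement likelihood collapses, $\p(y_k | \delta, x_k, \A_i, \Y_{k-1}) = \p(y_k | x_k)$. Together these recast the SSUE recursion as the familiar predict-then-update recursion of Bayesian filtering, run on the augmented ``state'' $(\delta, x_k)$ and carried in parallel over the hypotheses $\A_1,\ldots,\A_M$; the only nonstandard feature is that each hypothesis $\A_i$ reshapes the dynamics as $A+\delta\A_i$, coupling the unknown $\delta$ to the state.

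With this in hand, I would obtain~\eqref{mas6} by introducing $x_{k-1}$ and marginalizing, $\p(\delta, x_k | \A_i, \Y_{k-1}) = \int \p(\delta, x_k, x_{k-1} | \A_i, \Y_{k-1})\,dx_{k-1}$, factoring the integrand as $\p(x_k | \delta, x_{k-1}, \A_i, \Y_{k-1})\,\p(\delta, x_{k-1} | \A_i, \Y_{k-1})$, and then deleting $\Y_{k-1}$ from the first factor by the Markov property above. For~\eqref{mas7}, I would apply Bayes' rule conditionally on $(\A_i, \Y_{k-1})$, writing $\Y_k = \Y_{k-1} \cup \{y_k\}$:
\[
\p(\delta, x_k | \A_i, \Y_k) = \frac{\p(y_k | \delta, x_k, \A_i, \Y_{k-1})\,\p(\delta, x_k | \A_i, \Y_{k-1})}{\p(y_k | \A_i, \Y_{k-1})},
\]
where the numerator's likelihood reduces to $\p(y_k | x_k)$ and the denominator $\p(y_k | \A_i, \Y_{k-1}) = \iint \p(y_k | x_k)\,\p(\delta, x_k | \A_i, \Y_{k-1})\,dx_k\,d\delta$ does not depend on $(\delta, x_k)$ --- which is exactly the asserted proportionality.

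For~\eqref{mas5}, I would use Bayes' rule for the discrete hypothesis variable, $\p(\A_i | \Y_k) = \p(\A_i | y_k, \Y_{k-1}) = \p(y_k | \A_i, \Y_{k-1})\,\p(\A_i | \Y_{k-1}) / \p(y_k | \Y_{k-1})$, and then expand the normalizer by the law of total probability over the exhaustive and mutually exclusive events $\{\A = \A_j\}_{j=1}^M$, namely $\p(y_k | \Y_{k-1}) = \sum_{j=1}^M \p(y_k | \A_j, \Y_{k-1})\,\p(\A_j | \Y_{k-1})$. Lastly,~\eqref{mas4} is marginalization over $\A$ in the reverse direction, $\p(\delta, x_k | \Y_k) = \sum_{i=1}^M \p(\delta, x_k, \A_i | \Y_k) = \sum_{i=1}^M \p(\delta, x_k | \A_i, \Y_k)\,\p(\A_i | \Y_k)$. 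Chaining~\eqref{mas6}--\eqref{mas7} for each within-hypothesis posterior together with~\eqref{mas5} for the hypothesis pmf, and initializing at $k=0$ with the priors on $x_0$, $\delta$, and $\A$, yields a self-contained recursion, which is precisely the stated paradigm.

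The step I would watch most carefully --- and about the only place where an error could creep in --- is the bookkeeping with the mixed pdf/pmf $p$: $\A$ is discrete while $(\delta, x_k)$ are continuous, so Bayes' rule and the total-probability steps must combine densities and masses consistently; and since the uncertainty $\delta$ is coupled multiplicatively to the state in~\eqref{mas1} yet is itself a static parameter, the prediction step~\eqref{mas6} must integrate only over $x_{k-1}$ while keeping $\delta$ in the joint density --- which is also why a conventional multi-model filter does not apply here directly. Everything else is a routine unpacking of conditional independence.
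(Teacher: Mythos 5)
Your proposal is correct and follows essentially the same route as the paper's own proof: Chapman--Kolmogorov marginalization plus the Markov property for~\eqref{mas6}, Bayes' rule with the collapse $\p(y_k|\delta,x_k,\A_i,\Y_{k-1})=\p(y_k|x_k)$ for~\eqref{mas7}, and Bayes' rule with the law of total probability over the hypotheses for~\eqref{mas5}--\eqref{mas4}. The only difference is that you spell out the normalizing constants and the mixed pdf/pmf bookkeeping more explicitly than the paper does, which is a harmless refinement rather than a different argument.
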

\begin{proof}
See Appendix.\ref{Bayesian-Paradigm-Proof}.
\end{proof}

Theorem~\ref{Bayesian-Paradigm} characterizes the recursive Bayesian update   for the  pdf/pmf  of $x_k$, $\delta$, and $\A_i$ conditioned on the measurement data, leading to  a conceptual framework for SSUE.   In particular,~\eqref{mas6} shows the prediction of the joint conditional pdf of $\delta$ and $x_k$,~\eqref{mas7} performs the update of the pdf when the new   data $y_k$ is available, and~\eqref{mas5}-\eqref{mas4} evaluates the possibility of each   $\A_i$ and enables a fusion of the  pdf-pmf's based on different $\A_i$'s. It should be noted that,
because of the nonlinear nature of the model in~\eqref{mas1}, it is usually impossible to derive a closed-form solution as an estimator to the  Bayesian framework proposed in Theorem~\ref{Bayesian-Paradigm}.  We hence will  resort to numerical optimization  to computationally execute the framework.  Section~\ref{SSUE-Algorithm} will present an algorithm based on the  Newton's iterative method to achieve this end.

\section{Observability and Consistency Analysis}\label{Observability-Analysis}

In this section, we investigate  the   observability for SSUE and the consistency of the estimation of $\A$. To formulate a tractable analysis, we restrict our attention to a reduced version of the system in~\eqref{mas1}:
\begin{align}   \label{Linear-model}
\left\{ \begin{aligned} x_{k+1} &=(A+\delta\mathscr{A})x_{k}+w _{k}, \\
y_{k} &= C x_{k}+v _{k},
\end{aligned}\right. 
\end{align}%
where the measurement equation is linear.

Now, let us define the joint observability for $x_k$, $\delta$, and $\A$ and derive corresponding   testing conditions. 
With inspirations drawn from~\cite{Vidal:CDC:2002}, we propose the following definitions:

\begin{definition}
\label{def1} (Indistinguishability).  The triplet pairs $(x_0,\delta,
\mathscr{A})$ and $(x_0^{\prime},\delta^{\prime},\mathscr{A}^{\prime})$ for $\delta,\delta^\prime \in \Delta $ and $\A,\A^\prime \in \mathbb{A}$
are said to be indistinguishable on the interval $[0,K]$ for $K \in \mathbb{Z}^+$, if the corresponding output
sequences $\{y_0, y_1,\ldots,y_K\}$ and $\{y_0^{\prime}, y_1^{\prime}, \ldots, y_K^{\prime}\}$ when $w_k=0$ and $v_k=0$  are equal. We denote the set of triplet pairs that are indistinguishable from $(x_0,\delta,
\mathscr{A})$   as $\mathcal{I}(x_0,\delta,\mathscr{A})$.
\end{definition}

\begin{definition}(Observability). The triplet pair  $(x_0,\delta,\mathscr{A})$
is observable on the interval $[0, K]$ if $\mathcal{I}(x_0,\delta,%
\mathscr{A})=(x_0,\delta,\mathscr{A})$. When any admissible pair is
observable,  we   say that the system in~\eqref{Linear-model} is observable.\label{Obs-Def}
\end{definition}

Note that the definitions above are made in terms of the initial state $x_0$ and require the output sequences to differ when given a different $(x_0,\delta,\A)$. The following theorem provides an observability testing condition accordingly.

\begin{theorem}\label{Observability-testing}
\label{lem1} The system in~\eqref{Linear-model} is observable if and  only if there exists $N\in \mathbb{Z}^+$ and $N \leq K$ such that for $k\geq N$,
\begin{equation}  \label{rank-testing}
\text{rank}\left(
\begin{bmatrix}
\mathcal{O}_{k} (\delta,\mathscr{A} ) & \mathcal{O}_{k}(\delta^{\prime},%
\mathscr{A}^{\prime})%
\end{bmatrix}%
\right)=2n,
\end{equation}
where $\delta, \delta' \in \Delta$, $\A, \A'\in\mathbb{A}$, $(\delta, \A) \neq (\delta', \A')$, and
\begin{align*}
\mathcal{O}_{k}(\delta,\mathscr{A})=
\begin{bmatrix}
C^{\T} & \left(C(A+\delta \mathscr{A})\right)^{\T} & \cdots & \left(C(A+\delta %
\mathscr{A})^{k}\right)^{\T}%
\end{bmatrix}%
^{\T}.
\end{align*}
Furthermore, the true $x_0^*$, $\delta^*$ and $\A^*$ can be reconstructed by
\begin{subequations}
\begin{align}
\delta^*, \A^* &= \left\{\delta, \A: \mathrm{rank}\left(\left[\begin{matrix}\mathcal{O}_{k}(\delta, \A) & \mathcal{Y}_k^*\end{matrix}\right]\right) = n, \delta \in \Delta, \A \in \mathbb{A} \right\}, \\  x_0^* &= \left[\mathcal{O}_{k}^\top(\delta^*,\mathscr{A}^*) \mathcal{O}_{k}(\delta^*,\mathscr{A}^*) \right]^{-1} \mathcal{O}_{k}^\top(\delta^*,\mathscr{A}^*) \mathcal{Y}_k^*,
\end{align}
\end{subequations}
where $\mathcal{Y}_{k}^*$ is the output   sequence generated based on $x_0^*$, $\delta^*$ and $\A^*$ when $w_k=0$ and $v_k=0$.
\end{theorem}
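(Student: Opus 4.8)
The plan is to reduce the statement to linear algebra through the noiseless input--output map. Setting $w_k=0$ and $v_k=0$ in~\eqref{Linear-model} gives $y_k = C(A+\delta\A)^k x_0$, so stacking over $[0,k]$ yields $\mathcal{Y}_k = \mathcal{O}_k(\delta,\A)\,x_0$. Hence two triplets $(x_0,\delta,\A)$ and $(x_0',\delta',\A')$ are indistinguishable on $[0,K]$ in the sense of Definition~\ref{def1} exactly when $\mathcal{O}_K(\delta,\A)x_0 = \mathcal{O}_K(\delta',\A')x_0'$, i.e. when $\begin{bmatrix}\mathcal{O}_K(\delta,\A) & \mathcal{O}_K(\delta',\A')\end{bmatrix}$ annihilates the vector $\begin{bmatrix}x_0\\ -x_0'\end{bmatrix}$, which is nonzero whenever $(x_0,\delta,\A)\neq(x_0',\delta',\A')$. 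This identity is the bridge between Definition~\ref{def1} and the rank test~\eqref{rank-testing}, and everything else is squeezed out of it.

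I would next dispose of the role of $N$. The key observation is that $\begin{bmatrix}\mathcal{O}_k(\delta,\A) & \mathcal{O}_k(\delta',\A')\end{bmatrix}$ is itself the extended observability matrix of the block-diagonal pair $\bigl(\mathrm{diag}(A+\delta\A,\,A+\delta'\A'),\ [\,C\ \ C\,]\bigr)$, a system of dimension $2n$, since its $j$-th block row equals $[\,C(A+\delta\A)^j\ \ C(A+\delta'\A')^j\,]$. By the Cayley--Hamilton theorem applied to this $2n\times 2n$ matrix, the rank of the stacked matrix is nondecreasing in $k$ and already constant for $k\ge 2n-1$. Therefore~\eqref{rank-testing}, once it holds at a single $k=N$, holds for all $k\ge N$ (the rank cannot drop and $2n$ is the maximum possible), and if it fails for every $k$ it already fails at $k=2n-1$; so one may take $N\le\min(K,2n-1)\le K$, which is exactly the quantification in the statement.

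For the equivalence I would argue both directions from the bridge identity. \emph{Sufficiency:} assume~\eqref{rank-testing} for $k\ge N$, fix $K\ge N$, take an admissible pair $(x_0,\delta,\A)$ with $x_0\neq0$, and take any indistinguishable $(x_0',\delta',\A')$. If $(\delta,\A)\neq(\delta',\A')$, the stacked matrix has full column rank $2n$, forcing $x_0=x_0'=0$, a contradiction; so $(\delta,\A)=(\delta',\A')$, and applying~\eqref{rank-testing} with any second admissible mode shows each block $\mathcal{O}_K(\delta,\A)$ has full column rank $n$, whence $\mathcal{O}_K(\delta,\A)(x_0-x_0')=0$ gives $x_0=x_0'$; thus $\mathcal{I}(x_0,\delta,\A)$ is a singleton and, the pair being arbitrary, the system is observable. \emph{Necessity:} by contraposition, if~\eqref{rank-testing} fails for some distinct $(\delta,\A),(\delta',\A')$ (at $k=K$, using the stabilization above), the columns are dependent, so there is a nonzero $\begin{bmatrix}a\\ b\end{bmatrix}$ with $\mathcal{O}_K(\delta,\A)a=\mathcal{O}_K(\delta',\A')(-b)$; a short case split --- $a,b$ both nonzero gives the distinct indistinguishable triplets $(a,\delta,\A)$ and $(-b,\delta',\A')$, while one of them being zero exhibits a nontrivial kernel vector of a single $\mathcal{O}_K$ block, hence a state indistinguishable from $0$ under that mode --- shows the system is not observable. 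For the reconstruction, write $\mathcal{Y}_k^*=\mathcal{O}_k(\delta^*,\A^*)x_0^*$ with $x_0^*\neq0$; then $\mathcal{Y}_k^*\in\mathrm{range}\,\mathcal{O}_k(\delta^*,\A^*)$, so appending it keeps the rank at $n$, whereas for any $(\delta,\A)\neq(\delta^*,\A^*)$ the inclusion $\mathcal{Y}_k^*\in\mathrm{range}\,\mathcal{O}_k(\delta,\A)$ would (since $\mathcal{Y}_k^*\neq0$) produce two distinct indistinguishable triplets, contradicting observability; hence the rank strictly exceeds $n$ there. This singles out $(\delta^*,\A^*)$, and full column rank of $\mathcal{O}_k(\delta^*,\A^*)$ makes $\mathcal{O}_k^\T(\delta^*,\A^*)\mathcal{O}_k(\delta^*,\A^*)$ invertible, so the displayed least-squares formula returns $x_0^*$ exactly.

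The main obstacle I anticipate is not any computation but the bookkeeping over degenerate cases. The rank test is equivalent to observability only under the (standard) convention that admissible initial states are nonzero --- otherwise $(0,\delta,\A)$ is indistinguishable from $(0,\delta',\A')$ for every pair of modes and no triplet could be observable in the sense of Definition~\ref{Obs-Def} --- and one must also implicitly assume that genuinely more than one mode is admissible, for otherwise~\eqref{rank-testing} is vacuous. Likewise, the necessity direction has to separate cleanly the failure mode ``two distinct modes with nonzero states'' from ``one mode with a nontrivial unobservable subspace,'' and the clean uniform choice of $N$ hinges on recognizing the stacked matrix as a $2n$-dimensional observability matrix so that Cayley--Hamilton is invoked at the right dimension rather than argued ad hoc.
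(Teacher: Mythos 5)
Your proof is correct and follows the same route as the paper, which merely sketches the argument: the noiseless output satisfies $\mathcal{Y}_k=\mathcal{O}_k(\delta,\A)x_0$, so observability is equivalent to the column spaces of the $\mathcal{O}_k$'s for distinct modes intersecting only trivially, which is exactly the rank condition you formalize via the kernel of the stacked matrix. The paper declares the rest ``straightforward'' and omits it; your Cayley--Hamilton justification of the uniform $N$, the two-direction kernel argument, and the caveat about the degenerate triplets $(0,\delta,\A)$ are all correct elaborations of that sketch.
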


\begin{proof}
It is obvious  that $\mathcal{Y}_k$ lies in the column space of  $\mathcal{O}_k$ since  $\mathcal{Y}_k = \mathcal{O}_k(\delta, \A) x_0$. Hence, the necessary and sufficient condition for observability is that      the column spaces of different $\mathcal{O}_k$'s due to different $\delta$ and $\A$ must not overlap. With this notion, the result above is straightforward, so a detailed proof is    omitted here. 
\end{proof}

\begin{remark}
According to the above, the output sequence  is unique for a given triplet pair $(x_0,\delta,\A)$ for the system in~\eqref{Linear-model} under noise-free evolution. This thus provides an ability to uniquely determine $x_0$, $\delta$, and $\A$, which then allows one to reconstruct $x_k$ through time. The observability  definition can be considered as an extension of the standard observability for a linear deterministic system~\cite{Chen:OUP:2012}. An interesting question that remains is the development of weaker properties weaker, e.g., detectability, which will require  future work.  
\end{remark}

Based on the observability analysis, we   go forward to investigate  the consistency of the estimation of $\A$, which denotes the uncertainty's location. 
The overall aim of the consistency analysis here is to determine whether the true location, denoted as $\A_t$ for $t \in \mathbb{M}$, will stand out with the highest probability if we  evaluate $\p(\A_i | \Y_k)$  for each $i \in \mathbb{M}$. 
To begin with, we make the following assumption:
\begin{assumption}\label{Location-Equal-Porbability}
The \textit{a priori} probability  $p(\A_i) = \frac{1}{M}$ for each $i\in \mathbb{M}$.
\end{assumption}
Assumption~\ref{Location-Equal-Porbability} sets an equal initial probability for every possible location. With this, one can easily find out that the comparison between $\p(\A_t | \Y_k)$ and  $\p(\A_i | \Y_k)$ for $i  \neq t$ is equivalent to the one between $\p( \Y_k | \A_t) / \p( \Y_k | \A_i)$, by   Bayes' rule. As such, we consider the Kullback-Leibler (KL) divergence between   $\p(\mathcal{Y}_{k}|\mathscr{A}_{t})$ and $\p(\mathcal{Y}_{k}|%
\mathscr{A}_{j})$:
\begin{align*}    
\D(\mathscr{A}_{t} \| \mathscr{A}_{j}) =\mathbb{E}_{t} \left(
\log h_j^t  (\Y_k  )
\right) = \int \p(\mathcal{Y}_{k}|\mathscr{A}%
_{t})\log \frac{ \p(\mathcal{Y}_{k}|\mathscr{A}_{t})}{ \p(\mathcal{Y}_{k}|%
\mathscr{A}_{j})}d {\mathcal{Y}_{k}},
\end{align*} 
and define
\begin{align*}
h_{j}^{i} (\mathcal{Y}_{k} )= \frac{ \p( \mathcal{Y}_{k}|\mathscr{A}_{i} ) }{\p( \mathcal{Y}_{k}|\mathscr{A}_{j} ) }, \  \ h_{j}^{i} (y_{k} \, | \, \mathcal{Y}_{k-1} )= \frac{ \p( y_{k}|\mathscr{A}_{i},\mathcal{Y}_{k-1}) }{\p( y_{k}|\mathscr{A}_{j},\mathcal{Y}_{k-1} ) },
\end{align*}
Proceeding further, we   put together two more assumptions:

\begin{assumption}
\label{assum1} The considered system in~\eqref{Linear-model} is observable. 
\end{assumption}

\begin{assumption}
\label{assum2} The stochastic process $\left\{
\log h_i^t  (y_k  \, | \, \Y_{k-1}  ) \right \}$ is ergodic for each $i\in \mathbb{M}$.
\end{assumption} 
The theorem below summarizes the main result about the consistency for the estimation of $\A$. 

\begin{theorem}\label{Consistency}
Consider the system in~\eqref{Linear-model}. If Assumption~\ref{assum1} holds, then 
\begin{subequations}
\begin{align} \label{KL-Divergence}
\D(\mathscr{A}_{t} \| %
\mathscr{A}_{i})&>0, 
\end{align}
\end{subequations}
for $t \neq i$.
\setcounter{equation}{5}
Further, if Assumptions~\ref{Location-Equal-Porbability} and~\ref{assum2} also hold, then
\begin{subequations}    \refstepcounter{equation}
\begin{align}
\label{Ratio-Larger-than-1}
{\lim_{k\rightarrow \infty}}\frac{\p(\mathscr{A}_{t} | \mathcal{Y}_{k} )}{\p( \mathscr{A}_{i} | \mathcal{Y}_{k})}&>1,\\ \label{Ratio-infty}
\underset{%
k\rightarrow \infty }{\lim }\frac{\p( \mathscr{A}_{t} | \mathcal{Y}_{k})}{\p( \mathscr{A}_{i} | 
\mathcal{Y}_{k})}&=\infty , \ \mathrm{a.s.}
\end{align}
\end{subequations}
\end{theorem}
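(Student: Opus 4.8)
The plan is to prove the static bound \eqref{KL-Divergence} first and then lift it, via the ergodic theorem, to the asymptotic statements \eqref{Ratio-Larger-than-1}--\eqref{Ratio-infty}. For \eqref{KL-Divergence} I would start from Gibbs' (information) inequality, which already gives $\D(\mathscr{A}_t\|\mathscr{A}_i)\ge 0$ for every $k$, with equality if and only if the laws $\p(\mathcal{Y}_k|\mathscr{A}_t)$ and $\p(\mathcal{Y}_k|\mathscr{A}_i)$ agree almost everywhere; so the only work is to show observability forbids the two laws from coinciding when $t\neq i$. Here Theorem~\ref{Observability-testing} does the job: since $t\neq i$ the pair $(\delta,\mathscr{A}_t)$ differs from every $(\delta',\mathscr{A}_i)$, so by the rank condition \eqref{rank-testing} the column spaces of $\mathcal{O}_k(\delta,\mathscr{A}_t)$ and $\mathcal{O}_k(\delta',\mathscr{A}_i)$ meet only at the origin. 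Because $\mathcal{Y}_k=\mathcal{O}_k(\delta,\mathscr{A})x_0+(\text{noise term})$ and $P_0>0$, the covariance of the noiseless part, $\mathcal{O}_k(\delta,\mathscr{A})P_0\mathcal{O}_k(\delta,\mathscr{A})^{\T}$, has column space equal to that of $\mathcal{O}_k(\delta,\mathscr{A})$ and hence changes with the location; this pries $\p(\mathcal{Y}_k|\mathscr{A}_t)$ and $\p(\mathcal{Y}_k|\mathscr{A}_i)$ apart (most transparently when $Q_k=0$, where the noise part of $\mathrm{Cov}(\mathcal{Y}_k\,|\,\mathscr{A})$ is location-independent; in general one checks the noise term cannot exactly cancel this difference). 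Gibbs is then strict and \eqref{KL-Divergence} follows; the same reasoning works verbatim if $\delta$ is integrated against a prior, since the densities become Gaussian mixtures whose component covariances still differ.

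For \eqref{Ratio-Larger-than-1}--\eqref{Ratio-infty}, Assumption~\ref{Location-Equal-Porbability} and Bayes' rule would collapse the posterior ratio to a pure marginal-likelihood ratio,
\[
\frac{\p(\mathscr{A}_t|\mathcal{Y}_k)}{\p(\mathscr{A}_i|\mathcal{Y}_k)}=\frac{\p(\mathcal{Y}_k|\mathscr{A}_t)}{\p(\mathcal{Y}_k|\mathscr{A}_i)}=h_i^t(\mathcal{Y}_k),
\]
and the chain rule of probability would factor it as $\log h_i^t(\mathcal{Y}_k)=\sum_{j=0}^{k}\log h_i^t(y_j\,|\,\mathcal{Y}_{j-1})$. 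Assumption~\ref{assum2} then lets me apply the ergodic theorem to the summands, giving $\frac{1}{k+1}\log h_i^t(\mathcal{Y}_k)\to\mu$ almost surely, where $\mu=\mathbb{E}_t[\log h_i^t(y_k\,|\,\mathcal{Y}_{k-1})]$ is the steady-state per-step expected log-likelihood ratio under the true model. The pivotal observation is that $\mu$ is a nonnegative expected conditional KL divergence, and that the chain rule for KL divergence expresses the horizon-$k$ value of $\D(\mathscr{A}_t\|\mathscr{A}_i)$ as $\sum_{j=0}^{k}\mathbb{E}_t[\log h_i^t(y_j\,|\,\mathcal{Y}_{j-1})]$; under stationarity this makes $\mu$ the limiting per-step average of those divergences, so $\mu\ge 0$, while that $\mu>0$ (equivalently, that the horizon-$k$ divergence grows without bound) is the crux, discussed below. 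Granting it, $\log h_i^t(\mathcal{Y}_k)\to+\infty$ a.s., which is \eqref{Ratio-infty}; \eqref{Ratio-Larger-than-1} then follows a fortiori (and also directly from \eqref{KL-Divergence} via Jensen, since $\log\mathbb{E}_t[h_i^t(\mathcal{Y}_k)]\ge\mathbb{E}_t[\log h_i^t(\mathcal{Y}_k)]=\D(\mathscr{A}_t\|\mathscr{A}_i)>0$).

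The step I expect to be the real obstacle is passing from ``$\D(\mathscr{A}_t\|\mathscr{A}_i)>0$ at finite horizon'' to ``$\mu>0$'': one must argue that the information the measurements carry about the location does not bleed away as $k\to\infty$, for if the per-step divergences $\mathbb{E}_t[\log h_i^t(y_j\,|\,\mathcal{Y}_{j-1})]$ were summable then $\mu=0$ and one could conclude only a finite limit exceeding $1$, not $+\infty$. Ruling this out needs observability used quantitatively rather than merely qualitatively, e.g., by verifying that the steady-state one-step predictive (innovation) densities under $\mathscr{A}_t$ and $\mathscr{A}_i$ remain distinct in the limit. A secondary technical point is the precise form of ergodic theorem invoked: for a general $P_0$ the state process of \eqref{Linear-model} is only asymptotically, not exactly, stationary, so one would appeal to an asymptotically-mean-stationary version of Birkhoff's theorem (or argue the initial transient is negligible), which is exactly what Assumption~\ref{assum2} is there to underwrite. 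By comparison the distinctness-of-laws step in the first part is routine once the column-space fact from Theorem~\ref{Observability-testing} is in hand.
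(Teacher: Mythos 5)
Your treatment of \eqref{KL-Divergence} is essentially the paper's: the paper likewise writes $\mathcal{Y}_k=\mathcal{O}_k(\delta,\mathscr{A})x_0+\mathcal{I}_k(\delta,\mathscr{A})W_k+V_k$, observes that $\mathcal{Y}_k$ is zero-mean Gaussian, and uses the rank condition of Theorem~\ref{Observability-testing} (together with the fact that $AA^\T=BB^\T$ iff $A=BU$ for unitary $U$) to conclude that the covariances $\Sigma_k(\delta,\mathscr{A}_t)$ and $\Sigma_k(\delta,\mathscr{A}_i)$ differ, whence the two Gaussian laws differ and the divergence is strictly positive. The point you defer --- that the process-noise contribution cannot exactly cancel the difference coming from $\mathcal{O}_kP_0\mathcal{O}_k^\T$ --- is exactly what the paper's stacked matrix $\Pi_k=[\,\mathcal{O}_k\ \ \mathcal{I}_k\,]$ and that lemma are for, so this half is fine modulo that bookkeeping.

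The gap is in the second half, and you have named it yourself: everything after the ergodic theorem hinges on $\mu=\mathbb{E}_t[\log h_i^t(y_k\,|\,\mathcal{Y}_{k-1})]>0$, which you do not establish. Without it, Birkhoff gives only $\frac{1}{k}\log h_i^t(\mathcal{Y}_k)\rightarrow\mu\geq 0$ a.s., which is compatible with $\log h_i^t(\mathcal{Y}_k)$ converging to a finite constant, to zero, or oscillating; neither \eqref{Ratio-Larger-than-1} nor \eqref{Ratio-infty} then follows. (The parenthetical Jensen argument does not rescue \eqref{Ratio-Larger-than-1} either: it bounds $\mathbb{E}_t[h_i^t(\mathcal{Y}_k)]$, an expectation that exceeds $1$ for trivial reasons, not the realized limit of the random ratio.) The paper does not prove $\mu>0$ either; it closes \eqref{Ratio-infty} by a different device worth adopting: under the true measure, the reciprocal ratio $h_t^i(\mathcal{Y}_k)$ is a nonnegative martingale, so by the martingale convergence theorem it converges a.s.\ to some finite $H_i\geq 0$; if $H_i>0$ with positive probability, the one-step predictive ratios $h_t^i(y_k\,|\,\mathcal{Y}_{k-1})$ would tend to $1$ there, forcing the KL divergence between the conditional block laws under $\mathscr{A}_t$ and $\mathscr{A}_i$ to vanish, contradicting the observability-based covariance separation from the first part (which holds at every horizon). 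Hence $H_i=0$ and $h_i^t(\mathcal{Y}_k)\rightarrow\infty$ a.s., giving \eqref{Ratio-infty} and a fortiori \eqref{Ratio-Larger-than-1}. If you prefer to stay on your ergodic route, you must prove non-summability of the per-step conditional divergences (e.g., via a uniform lower bound on the steady-state innovation-covariance mismatch between the two models), which is precisely the obstacle you flagged but left open.
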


\begin{proof}
See Appendix.\ref{Consistency-Proof}
\end{proof}

Theorem~\ref{Consistency}  unveils  a three-fold, progressive result. First,~\eqref{KL-Divergence} shows that   $\p(\Y_k | \A_t)$ differs from $\p(\Y_k | \A_i)$ for $i \neq t$ when the system is observable.  This is because the output sequence based on $\A_t$ will have statistics differentiable from that based on any other $\A_i$. Then,~\eqref{Ratio-Larger-than-1}   highlights that $\p(  \A_t | \Y_k ) $ can surpass $\p(  \A_i | \Y_k )$ for $i\neq t$ as $k \rightarrow \infty$, suggesting that    the true location $\A_t$ will be favored in the evaluation of $\p(  \A_i | \Y_k )$ for $i\in\mathbb{M}$. Finally,~\eqref{Ratio-infty} implies that  $\p( \A_t | \Y_k)$ and $\p( \A_i | \Y_k)$ for $i \neq t$ will   approach 1 and 0 as $k\rightarrow \infty$, respectively. Loosely speaking, Theorem~\ref{Consistency} suggests that, by evaluating $\p(\A_i | \Y_k)$ for $i \in \mathbb{M}$, one can  almost surely determine $\A_t$ from the candidate locations for an observable system.
  Note that   Assumptions~\ref{assum1}-\ref{assum2} allow  us to develop the above result, which has no equivalence in the literature to our knowledge and is worthwhile to know. It will be interesting and useful to investigate whether  Assumptions~\ref{assum1}-\ref{assum2} can be relaxed to obtain more general conditions, which will motivate our further research effort. 


\section{Computational Aspects of Bayesian SSUE}\label{SSUE-Algorithm}

This section focuses on developing a computational SSUE algorithm to execute the framework proposed in Section~\ref{Problem-Formulation}.   
Let us begin by  making the following assumption:

\begin{assumption}
\label{assum4}
Each of the pdf's $\p(y_{k}|x_{k}),$ $\p(\delta ,x_{k}|\mathscr{A}%
_{i},\mathcal{Y}_{k})$ and $\p(\delta ,x_{k}|\mathscr{A}_{i},\mathcal{Y}%
_{k-1})$ can be replaced by a Gaussian distribution:
\begin{subequations}
\begin{align}\label{P_y_x_Gaussian}
\p(y_{k}|x_{k})& \sim \mathcal{N}\left( h(x_{k}),R_{k}\right) , \\ \label{P_delta_x_Y_k_Gaussian}
\p(\delta ,x_{k}|\mathscr{A}_{i},\mathcal{Y}_{k})& \sim \mathcal{N}\Bigg(%
\begin{bmatrix}
\hat{\delta}_{i,k|k} \\
\hat{x}_{i,k|k}%
\end{bmatrix}%
,%
\begin{bmatrix}
P_{i,k|k}^{\delta } & P_{i,k|k}^{\delta x} \\
(P_{i,k|k}^{\delta x})^{\T} & P_{i,k|k}^{x}%
\end{bmatrix}%
\Bigg), \\ \label{P_delta_x_Y_k-1_Gaussian}
\p(\delta ,x_{k}|\mathscr{A}_{i},\mathcal{Y}_{k-1})& \sim \mathcal{N}%
\Bigg(%
\begin{bmatrix}
\hat{\delta}_{i,k|k-1} \\
\hat{x}_{i,k|k-1}%
\end{bmatrix}%
,%
\begin{bmatrix}
P_{i,k|k-1}^{\delta } & P_{i,k|k-1}^{\delta x} \\
(P_{i,k|k-1}^{\delta x})^{\T} & P_{i,k|k-1}^{x}%
\end{bmatrix}%
\Bigg),
\end{align}%
\end{subequations}
where $\hat{\delta}_{i,k|l}$ and $\hat{x}_{i,k|l}$ are the 
estimates of  $\delta $ and  $x_{k}$ at time $k$ when given  $\mathscr{A}_{i}$  and $\Y_l$  with the associated covariances denoted as
$P_{i,k|l}^{\delta }$ and $P_{i,k|l}^{x}$, respectively, and $P_{i,k|l}^{\delta x}$  is the corresponding cross-covariance.
\end{assumption}

Note that if complete descriptions  of $\p(\delta ,x_{k}|%
\mathscr{A}_{i},\mathcal{Y}_{k})$ and $\p(\mathscr{A}_{i}|\mathcal{Y}_{k})$
are available, $\hat{\delta}_{i,k|k}$ and $\hat{x}_{i,k|k}$ can be easily
obtained using straightforward methods like conditional  mean calculation. However, the nonlinear coupling of $\delta ,x_{k}$, and $%
\mathscr{A}_{i}$ makes this ideal unrealistic.   Assumption \ref{assum4} is thus presented to mitigate this difficulty, through  approximating   $\p(y_{k}|x_{k})$, $\p(\delta ,x_{k}|\mathscr{A}%
_{i},\mathcal{Y}_{k})$, and $\p(\delta ,x_{k}|\mathscr{A}_{i},\mathcal{Y}%
_{k-1})$ each as a Gaussian distribution with the same mean and covariance.   Gaussianity assumptions akin to this one represent a common treatment for   nonlinear estimation problems, which can be found in diverse literature~\cite{Anderson:1979,Ito-TAC:2000}. 

We now set off to develop a computational algorithm based on~\eqref{mas6}-\eqref{mas4} and Assumption~\ref{assum4}. Our first step is to predict $\delta$ and $x_k$ given $\Y_{k-1}$ and $\A_i$, leveraging~\eqref{mas6}. We consider the MAP estimation  problem below:
\begin{equation}
\begin{bmatrix}
\hat{\delta}_{i,k|k-1} \\
\hat{x}_{i,k|k-1}%
\end{bmatrix}%
=\text{arg}~\underset{\delta ,x_{k}}{\text{max}}~\p(\delta ,x_{k}|\mathscr{A}%
_{i},\mathcal{Y}_{k-1}) , \label{Prediction-MAP}
\end{equation}
The solution is given in the following lemma.
\begin{lemma}\label{Prediction-Lemma} 
Suppose that Assumption~\ref{assum4} holds. An approximate solution to the prediction problem in~\eqref{Prediction-MAP} is 
\begin{subequations}\label{Prediction-Formula}
\begin{align}
\hat{\delta}_{i,k|k-1}& =\hat{\delta}_{i,k-1|k-1},  \label{mas9} \\
\hat{x}_{i,k|k-1}& =(A+\hat{\delta}_{i,k-1|k-1}\mathscr{A}_{i})\hat{x}%
_{i,k-1|k-1},  \label{mas10}
\end{align}%
\end{subequations}
\setcounter{equation}{8}
with the associated covariances being
\begin{subequations} \refstepcounter{equation}
\begin{align}
P_{i,k|k-1}^{\delta }&= P_{i,k-1|k-1}^{\delta },  \label{mas11} \\
P_{i,k|k-1}^{x}&= F_{k-1}
\begin{bmatrix}
P_{i,k-1|k-1}^{\delta } & P_{i,k-1|k-1}^{\delta x } \cr  (P_{i,k-1|k-1}^{\delta x})^\T & P_{i,k-1|k-1}^{ x }
\end{bmatrix} F_{k-1}^\top  +Q_k,\\
P_{i,k|k-1}^{\delta x  }&= 
\begin{bmatrix}
 P_{i,k-1|k-1}^{\delta } & P_{i,k-1|k-1}^{ \delta x } 
\end{bmatrix}F_{k-1} ^\T,
\end{align}
\end{subequations}
where 
\begin{align*}
F_{k-1} =  \begin{bmatrix} \mathscr{A}_{i}\hat{x}%
_{i,k-1|k-1} &  A+\hat{\delta}_{i,k-1|k-1}\mathscr{A}_{i}   \end{bmatrix} .
\end{align*}
\end{lemma}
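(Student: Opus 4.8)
The plan is to carry out the Chapman--Kolmogorov prediction~\eqref{mas6} under the Gaussian approximation of Assumption~\ref{assum4}, dealing with the bilinear term $\delta\A_{i}x_{k-1}$ by a first-order, extended-Kalman-type linearization about the posterior mean at time $k-1$. The process equation in~\eqref{mas1} gives $\p(x_{k}|\delta,x_{k-1},\A_{i})\sim\mathcal{N}\!\left((A+\delta\A_{i})x_{k-1},\,Q_{k}\right)$, while by Assumption~\ref{assum4} the density $\p(\delta,x_{k-1}|\A_{i},\Y_{k-1})$ is Gaussian with mean $[\hat\delta_{i,k-1|k-1}^{\T}\ \ \hat x_{i,k-1|k-1}^{\T}]^{\T}$ and the block covariance in~\eqref{P_delta_x_Y_k-1_Gaussian} taken at time $k-1$. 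Since~\eqref{mas6} is the marginalization over $x_{k-1}$ of the joint density $\p(x_{k}|\delta,x_{k-1},\A_{i})\,\p(\delta,x_{k-1}|\A_{i},\Y_{k-1})$, it suffices to build a joint Gaussian surrogate for $(\delta,x_{k-1},x_{k})$ and then drop the $x_{k-1}$ block; the MAP estimator of the resulting Gaussian marginal is then simply its mean, which is exactly what~\eqref{Prediction-MAP} asks for.

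First I would linearize $g(\delta,x_{k-1})=(A+\delta\A_{i})x_{k-1}$ about $(\hat\delta_{i,k-1|k-1},\hat x_{i,k-1|k-1})$. Its partials are $\nabla_{\delta}g=\A_{i}x_{k-1}$ and $\nabla_{x_{k-1}}g=A+\delta\A_{i}$, so the Jacobian at the mean is exactly $F_{k-1}=\bigl[\A_{i}\hat x_{i,k-1|k-1}\ \ A+\hat\delta_{i,k-1|k-1}\A_{i}\bigr]$, as stated in the lemma. Appending the trivial identity $\delta=\hat\delta_{i,k-1|k-1}+(\delta-\hat\delta_{i,k-1|k-1})$, the pair $(\delta,x_{k})$ is an affine image of the Gaussian vector $(\delta,x_{k-1})$ plus the independent noise $w_{k}$; hence its mean is the map evaluated at the mean, giving $\hat\delta_{i,k|k-1}=\hat\delta_{i,k-1|k-1}$ and $\hat x_{i,k|k-1}=(A+\hat\delta_{i,k-1|k-1}\A_{i})\hat x_{i,k-1|k-1}$, i.e.~\eqref{mas9}--\eqref{mas10}, and its covariance is
\begin{align*}
\bar{F}_{k-1}\,\Sigma_{i,k-1|k-1}\,\bar{F}_{k-1}^{\T}+\begin{bmatrix}0 & 0\\ 0 & Q_{k}\end{bmatrix},\qquad
\bar{F}_{k-1}=\begin{bmatrix}1 & 0\\ \A_{i}\hat x_{i,k-1|k-1} & A+\hat\delta_{i,k-1|k-1}\A_{i}\end{bmatrix},
\end{align*}
where $\Sigma_{i,k-1|k-1}$ is the posterior block covariance in~\eqref{P_delta_x_Y_k-1_Gaussian} and the bottom block row of $\bar{F}_{k-1}$ is $F_{k-1}$. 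Reading off the three blocks of this expression reproduces $P_{i,k|k-1}^{\delta}$, $P_{i,k|k-1}^{x}$, and $P_{i,k|k-1}^{\delta x}$ as in~\eqref{mas11} and its two companion formulas. Since a Gaussian attains its maximum at its mean, the predicted mean above solves~\eqref{Prediction-MAP}, which establishes~\eqref{Prediction-Formula}.

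The main obstacle, and the reason the lemma claims only an \emph{approximate} solution, is the bilinear coupling $\delta\A_{i}x_{k-1}$: the exact predicted density $\p(\delta,x_{k}|\A_{i},\Y_{k-1})$ is genuinely non-Gaussian, since the product of two jointly Gaussian quantities is not Gaussian, so the first-order linearization silently discards the second- and higher-order contributions of that product. The step is therefore justified only to the extent that the posterior covariance at time $k-1$ is small enough to make those contributions negligible, which is exactly the spirit of Assumption~\ref{assum4}. A secondary point demanding care is that $\delta$ must pass through the prediction untransformed; unlike a plain extended-Kalman time update the augmented ``dynamics'' of $\delta$ are the identity map, and it is precisely the top block row $[\,1\ \ 0\,]$ of $\bar{F}_{k-1}$ that makes $P_{i,k|k-1}^{\delta}$ and $P_{i,k|k-1}^{\delta x}$ come out in the stated form.
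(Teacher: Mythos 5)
Your proposal is correct and follows essentially the same route as the paper's proof: a first-order Taylor linearization of the bilinear map $(\delta,x_{k-1})\mapsto(A+\delta\mathscr{A}_{i})x_{k-1}$ about the time-$(k-1)$ posterior mean (whose Jacobian is exactly $F_{k-1}$), Gaussian propagation through the resulting affine map plus independent noise $w_k$, and the observation that the MAP estimate of a Gaussian is its mean. You are merely more explicit than the paper in writing out the augmented Jacobian $\bar{F}_{k-1}$ with top row $[\,1\ \ 0\,]$ and reading off the three covariance blocks, which is a welcome but not substantively different elaboration.
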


\begin{proof}
See Appendix.\ref{Prediction-Proof}.
\end{proof}

Based on  the passing from $\p(\delta, x_k | \A_i, \Y_{k-1})$ to $\p(\delta, x_k | \A_i, \Y_{k})$  shown in~\eqref{mas7},   the arrival of $y_k$ can allow us to update the prediction. We consider the problem
\begin{align}\label{Update-Estimation}
\begin{bmatrix}
\hat{\delta}_{i,k|k} \\
\hat{x}_{i,k|k}%
\end{bmatrix}%
=\arg\max_{\delta ,x_{k}}~\p(\delta ,x_{k}|\mathscr{A}%
_{i},\mathcal{Y}_{k}). 
\end{align}%
Under Assumption~\ref{assum4}, one can convert~\eqref{Update-Estimation}   into a nonlinear optimization problem, which can be addressed by the  Newton's method. This leads to a solution provided below, where  we denote $\xi _{k} = 
\begin{bmatrix}
\delta ^{\T} & x_{k}^{\T}%
\end{bmatrix}^\T$ for notational simplicity.  

\begin{lemma}
\label{Update-Lemma} Suppose that Assumption~\ref{assum4} holds. A Newton's iterative update solution to \eqref{Update-Estimation} is given by
\begin{subequations}\label{Update-Formula}
\begin{align} \label{Newton-Iterative-Update}
\hat{\xi}_{i,k|k}^{(\ell+1)}=\hat{\xi}_{i,k|k}^{(\ell)}-\left[ \nabla _{\xi
}^{\T}r_{i} \left(\hat{\xi}_{i,k|k}^{(\ell)}\right) \cdot \nabla _{\xi }r_{i} \left(\hat{\xi}_{i,k|k}^{(\ell)}\right) + S_{i} \left(\hat{\xi}_{i,k|k}^{(\ell)}\right)  \right]^{-1} \nabla _{\xi }^{\T}r_{i} \cdot r_{i}\left( \hat{\xi}%
_{i,k|k}^{(\ell)}\right) ,  
\end{align}%
\end{subequations}\setcounter{equation}{10}
where the superscript $(\ell)$ counts the iteration step, and
\begin{align*}
r_{i}(\delta ,x_{k}) &= 
\begin{bmatrix}
R_{k}^{-{1 \over 2}} (y_{k}- h(x_{k})) \\
\left(P_{i,k|k-1}^\xi \right)^{-{1 \over 2}}(\xi_{k}-\hat{\xi}_{i,k|k-1})%
\end{bmatrix}, \
P_{i,k|k-1}^\xi = \begin{bmatrix}
P_{i,k|k-1}^{\delta } & P_{i,k|k-1}^{\delta x} \\
\left(P_{i,k|k-1}^{\delta x}\right)^{\T} & P_{i,k|k-1}^{x}%
\end{bmatrix},\\ 
\nabla _{\xi }r_{i}&=%
\begin{bmatrix}
0 & - R_k^{- {1 \over 2}} C^{(\ell)} \cr \multicolumn{2}{c}{ \left(P_{i,k|k-1}^\xi \right)^{-{1 \over 2}}}
\end{bmatrix}, \  C^{(\ell)}= \left. \nabla_x h \right|_{\hat{x}_{i,k|k}^{(\ell)}},  \ S_{i} \left(\hat{\xi}_{i,k|k}^{(\ell)}\right) = \sum_{j=1}^{p+n+1} r_{i,j}\left(\hat{\xi}_{i,k|k}^{(\ell)}\right) \nabla^2 r_{i,j}\left(\hat{\xi}_{i,k|k}^{(\ell)}\right).
\end{align*}
Here, $r_{i,j}$ is the $j$-the element of the column vector $r_i$.
The associated estimation error covariance  is 
\begin{subequations} \refstepcounter{equation}
\begin{align}
P_{i,k|k}^\xi 
= 
\left[
H+ (P_{i,k|k-1}^\xi)^{-{1 }}
\right]^{-1}
,  \label{mas16}
\end{align}
\end{subequations}
with
\begin{align*}
H = \begin{bmatrix}
0 & 0 \cr 0 & \left( C^{(\ell_{\max})}\right)^\T R_k^{-1} C^{(\ell_{\max})}
\end{bmatrix},
\end{align*}
where $\ell_{\max}$ is the maximum iteration number.
\end{lemma}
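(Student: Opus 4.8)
The plan is to turn the MAP problem \eqref{Update-Estimation} into an unconstrained nonlinear least-squares problem and then verify that \eqref{Newton-Iterative-Update} is exactly the Newton iteration for it. First I would combine the update relation \eqref{mas7} with Assumption~\ref{assum4}: since $p(y_{k}|x_{k})\sim\mathcal{N}(h(x_{k}),R_{k})$ and $p(\delta,x_{k}|\A_{i},\Y_{k-1})\sim\mathcal{N}(\hat{\xi}_{i,k|k-1},P_{i,k|k-1}^{\xi})$, the posterior $p(\delta,x_{k}|\A_{i},\Y_{k})$ is proportional to the product of these two Gaussian densities, with the proportionality factor $p(y_{k}|\A_{i},\Y_{k-1})$ independent of $(\delta,x_{k})$. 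Taking $-\log$ and discarding terms that do not depend on $\xi_{k}$, maximizing the posterior is equivalent to minimizing
\begin{align*}
J(\xi_{k})&=\tfrac12\bigl(y_{k}-h(x_{k})\bigr)^{\T}R_{k}^{-1}\bigl(y_{k}-h(x_{k})\bigr)+\tfrac12\bigl(\xi_{k}-\hat{\xi}_{i,k|k-1}\bigr)^{\T}\bigl(P_{i,k|k-1}^{\xi}\bigr)^{-1}\bigl(\xi_{k}-\hat{\xi}_{i,k|k-1}\bigr)\\
&=\tfrac12\,r_{i}(\xi_{k})^{\T}r_{i}(\xi_{k}),
\end{align*}
with $r_{i}$ the stacked residual of the lemma, a column vector of length $p+n+1$ (a $p$-block from the measurement term and an $(n+1)$-block from the prior term). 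Thus the MAP estimate is the minimizer of $J$.

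Next I would apply Newton's method to $J$. For a sum-of-squares objective $J=\tfrac12\,r_{i}^{\T}r_{i}$ one has $\nabla_{\xi}J=\nabla_{\xi}^{\T}r_{i}\cdot r_{i}$ and $\nabla_{\xi}^{2}J=\nabla_{\xi}^{\T}r_{i}\cdot\nabla_{\xi}r_{i}+\sum_{j=1}^{p+n+1}r_{i,j}\nabla^{2}r_{i,j}=\nabla_{\xi}^{\T}r_{i}\cdot\nabla_{\xi}r_{i}+S_{i}$. To obtain $\nabla_{\xi}r_{i}$ explicitly I would differentiate the two blocks of $r_{i}$ separately: the measurement block $R_{k}^{-1/2}(y_{k}-h(x_{k}))$ does not involve $\delta$, so its derivative with respect to $\delta$ vanishes, while its derivative with respect to $x_{k}$ is $-R_{k}^{-1/2}\nabla_{x}h=-R_{k}^{-1/2}C^{(\ell)}$ evaluated at the current iterate $\hat{x}_{i,k|k}^{(\ell)}$; the prior block $(P_{i,k|k-1}^{\xi})^{-1/2}(\xi_{k}-\hat{\xi}_{i,k|k-1})$ is affine in $\xi_{k}$, hence its Jacobian is the constant matrix $(P_{i,k|k-1}^{\xi})^{-1/2}$. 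Stacking these blocks gives exactly the $\nabla_{\xi}r_{i}$ displayed in the statement, and the Hessians $\nabla^{2}r_{i,j}$ of the affine prior components vanish, so only the measurement components contribute to $S_{i}$. Substituting $\nabla_{\xi}J$ and $\nabla_{\xi}^{2}J$ into the Newton step $\hat{\xi}^{(\ell+1)}=\hat{\xi}^{(\ell)}-[\nabla_{\xi}^{2}J]^{-1}\nabla_{\xi}J$ yields \eqref{Newton-Iterative-Update}.

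For the error covariance I would invoke the standard Laplace-type approximation: near the converged MAP point $\hat{\xi}_{i,k|k}$ the posterior $p(\delta,x_{k}|\A_{i},\Y_{k})$ is approximately Gaussian with precision matrix given by $\nabla_{\xi}^{2}J$ at that point, and in the Gauss--Newton spirit the residual-weighted curvature term $S_{i}$ is dropped, leaving $P_{i,k|k}^{\xi}\approx[\nabla_{\xi}^{\T}r_{i}\cdot\nabla_{\xi}r_{i}]^{-1}$ evaluated at $\ell=\ell_{\max}$. A short block multiplication of the explicit Jacobian, using $R_{k}^{-1/2}R_{k}^{-1/2}=R_{k}^{-1}$, $(P_{i,k|k-1}^{\xi})^{-1/2}(P_{i,k|k-1}^{\xi})^{-1/2}=(P_{i,k|k-1}^{\xi})^{-1}$, and the fact that the $\delta$-column of the measurement block is zero, gives $\nabla_{\xi}^{\T}r_{i}\cdot\nabla_{\xi}r_{i}=\begin{bmatrix}0&0\\0&(C^{(\ell_{\max})})^{\T}R_{k}^{-1}C^{(\ell_{\max})}\end{bmatrix}+(P_{i,k|k-1}^{\xi})^{-1}=H+(P_{i,k|k-1}^{\xi})^{-1}$, and inverting yields \eqref{mas16}.

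I expect the main obstacle to lie not in the differentiation but in the justifications surrounding it: arguing that discarding $S_{i}$ for the covariance is legitimate (the Gauss--Newton / Laplace approximation, reasonable when the measurement residuals are small or $h$ is mildly nonlinear); keeping the matrix square-root factors consistent so that the cross terms cancel exactly as claimed; and noting that \eqref{Newton-Iterative-Update} recovers the exact MAP estimate only as $\ell\to\infty$ and under the usual local-convergence hypotheses for Newton's method, so in practice it is a (very good) approximate solver terminated at $\ell_{\max}$. These caveats are consistent with Assumption~\ref{assum4} already trading exactness for Gaussianity.
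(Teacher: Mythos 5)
Your proposal is correct and follows essentially the same route as the paper: convert \eqref{Update-Estimation} via \eqref{mas7} and Assumption~\ref{assum4} into minimizing the negative log-posterior $r_{i}^{\T}r_{i}$, apply Newton's method to this nonlinear least-squares objective to get \eqref{Newton-Iterative-Update}, and obtain \eqref{mas16} by dropping the second-order term. The only cosmetic difference is that the paper justifies the covariance as the inverse Fisher information matrix $\mathbb{E}\{\nabla_{\xi}L_{i}\cdot\nabla_{\xi}^{\T}L_{i}\}=\nabla_{\xi}^{\T}r_{i}\cdot\nabla_{\xi}r_{i}$, whereas you use the equivalent Laplace/Gauss--Newton argument; both yield $H+(P_{i,k|k-1}^{\xi})^{-1}$.
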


\begin{proof}
See Appendix.\ref{Update-Proof}.
\end{proof}

To initialize the running of~\eqref{Newton-Iterative-Update},  we can let  $\hat{\xi}_{i,k|k}^{(0)}$ be $\hat{\xi}_{i,k|k}^{(0)}=%
\begin{bmatrix}
\hat{\delta}_{i,k|k-1}^\top & \hat{x}_{i,k|k-1}^\top%
\end{bmatrix}^\top%
$ for convenience. The termination of its running depends on a pre-set stop condition, which can be based on a maximum iteration number $\ell_{\max}$ or a threshold for the difference between two consecutive iterations. The   estimate obtained at the final iteration is assigned to be $\hat \xi_{i, k|k}$.

\begin{remark}
The Newton's method is used here because it often demonstrates good convergence properties --- at least quadratic convergence in a local neighborhood of the true solution. In practice, one can slightly perturb $Q_k$ such that $Q_k>0$ to improve the numerical stability.  Meanwhile, it should be noted that  the numerical optimization for~\eqref{Update-Estimation} can be performed using other approaches. For instance, when the problem is mildly   nonlinear, we  can neglect the second-order term $S_i$, which then   yields a Gauss-Newton update.  In addition, the Levenberg-Marquardt   or conjugate gradient methods can be other options for the optimization procedure.   
\end{remark}

Moving forward,  we   seek to evaluate $\p(\A_i | \Y_k)$ for each $ i \in \mathbb{M}$ based on~\eqref{mas5}  and estimate $\A$ through
\begin{align}\label{Location-Identification-Initial}
\hat \A_k  = \arg\max_{\A_i \in \mathbb{A} } \p(\A_i | \Y_k).
\end{align}
Following this, we can perform multi-model fusion of the $\delta$ and $x_k$ estimation conditioned on different $\A_i$'s using~\eqref{mas4}. The next lemma shows the  procedure to accomplish these aims.

\begin{lemma}\label{Fusion-Lemma}
Suppose that Assumption~\ref{assum4} holds. The estimation of $\A$ at time $k$ is 
\begin{subequations}\label{Location-Evaluation}
\begin{align}\label{Location-Identification}
\hat \A_k = \A_\rho, \ \mathrm{with} \  \rho = \arg\max_{i\in \mathbb{M} } \mu_{i,k},
\end{align}
\end{subequations}\setcounter{equation}{12}
where $\mu_{i,k} = \p(\A_i | \Y_k)$ and $\lambda_{i,k} = \p(y_k | \A_i,  \Y_{k-1})$   are expressed as, respectively,
\begin{subequations}\refstepcounter{equation}
\begin{align}\label{mu-update}
\mu _{i,k}&=\frac{\lambda _{i,k}\mu _{i,k-1}}{\overset{M}{\underset{i=1}{\sum
}}\lambda _{i,k}\mu _{i,k-1}},\\ \label{lambda-update}
\lambda _{i,k}=& {(2\pi)^{- {p\over 2} } |\Gamma _{i,k}|^{- {1 \over 2}}}\exp \left[-\frac{1}{2}\left(y_{k}-h (
\hat{x}_{i,k |k-1})\right)^{\T} \Gamma _{i,k}^{-1} \left(y_{k}-h(\hat{x}_{i,k |k-1}) \right)\right], \\
\Gamma_{i,k}&=  \left. \nabla_x h \right|_{\hat{x}_{i, k|k-1} }  P_{i,k|k-1}^x    \left. \nabla_x^\T h \right|_{\hat{x}_{i, k|k-1} }  +R_{k}.   
\end{align}
\end{subequations}
The final, fused estimate  of $\xi_k$, i.e., $\delta $ and
$x_{k}$, and the associated covariance are
\begin{subequations}\label{Fused-Estimation}
\begin{align}
\hat{\xi}_{k|k}&=  \int \xi_k \p(\xi_k | \mathcal{Y}_k ) d  \xi_k  = \overset{M}{\underset{i=1}{\sum }}\mu _{i,k}\hat{\xi}%
_{i,k|k},  \label{mas17} \\
P_{k|k}^\xi &= \int  \left( \xi_k - \hat \xi_{k|k} \right)  \left( \xi_k - \hat \xi_{k|k} \right)^\top \p(\xi_k | \mathcal{Y}_k) d \xi_k  =\overset{M}{\underset{i=1}{\sum }}\mu _{i,k} \left[P_{i,k|k}^\xi +(\hat{\xi}%
_{i,k|k} -\hat{\xi}_{ k|k})(\hat{\xi}_{i,k|k} -\hat{\xi}_{ k|k})^{\T} \right].
\label{mas18}
\end{align}%
\end{subequations}
\end{lemma}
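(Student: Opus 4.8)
The plan is to derive the three ingredients of the lemma --- the recursion for the location pmf $\mu_{i,k}$ (hence~\eqref{Location-Identification}), the closed form of the one-step measurement likelihood $\lambda_{i,k}$, and the fused moments in~\eqref{Fused-Estimation} --- by specializing the Bayesian recursions of Theorem~\ref{Bayesian-Paradigm} under the Gaussianity in Assumption~\ref{assum4}. The pmf part is immediate: setting $\mu_{i,k}=\p(\A_i|\Y_k)$ and $\lambda_{i,k}=\p(y_k|\A_i,\Y_{k-1})$, equation~\eqref{mas5} is precisely~\eqref{mu-update}, and~\eqref{Location-Identification} is just~\eqref{Location-Identification-Initial} rewritten in this notation because $\arg\max_{i}\p(\A_i|\Y_k)=\arg\max_{i}\mu_{i,k}$.

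The substantive step is the evaluation of $\lambda_{i,k}$. I would begin from the marginalization $\p(y_k|\A_i,\Y_{k-1})=\int \p(y_k|\delta,x_k,\A_i,\Y_{k-1})\,\p(\delta,x_k|\A_i,\Y_{k-1})\,d\delta\,dx_k$ and use that, since $y_k=h(x_k)+v_k$ with $v_k$ independent of the rest, $\p(y_k|\delta,x_k,\A_i,\Y_{k-1})=\p(y_k|x_k)$; integrating out $\delta$ then leaves $\int \p(y_k|x_k)\,\p(x_k|\A_i,\Y_{k-1})\,dx_k$, where under Assumption~\ref{assum4} the $x_k$-marginal is $\mathcal{N}(\hat x_{i,k|k-1},P_{i,k|k-1}^x)$ and $\p(y_k|x_k)\sim\mathcal{N}(h(x_k),R_k)$. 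To perform this integral I would linearize $h$ about $\hat x_{i,k|k-1}$, i.e.\ $h(x_k)\approx h(\hat x_{i,k|k-1})+\left.\nabla_x h\right|_{\hat x_{i,k|k-1}}(x_k-\hat x_{i,k|k-1})$, which makes $y_k$ approximately Gaussian with mean $h(\hat x_{i,k|k-1})$ and covariance $\Gamma_{i,k}=\left.\nabla_x h\right|_{\hat x_{i,k|k-1}}P_{i,k|k-1}^x\left.\nabla_x^{\T} h\right|_{\hat x_{i,k|k-1}}+R_k$; evaluating this density at the observed $y_k$ yields exactly~\eqref{lambda-update} and the stated $\Gamma_{i,k}$.

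For the fusion step I would invoke~\eqref{mas4}, $\p(\delta,x_k|\Y_k)=\sum_{i=1}^M \p(\delta,x_k|\A_i,\Y_k)\,\mu_{i,k}$, together with the consequences of Assumption~\ref{assum4} that $\int \xi_k\,\p(\xi_k|\A_i,\Y_k)\,d\xi_k=\hat\xi_{i,k|k}$ and $\int(\xi_k-\hat\xi_{i,k|k})(\xi_k-\hat\xi_{i,k|k})^{\T}\,\p(\xi_k|\A_i,\Y_k)\,d\xi_k=P_{i,k|k}^\xi$. Substituting the mixture into $\hat\xi_{k|k}=\int \xi_k\,\p(\xi_k|\Y_k)\,d\xi_k$ gives~\eqref{mas17} at once. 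For~\eqref{mas18}, I would substitute the mixture into $P_{k|k}^\xi=\int(\xi_k-\hat\xi_{k|k})(\xi_k-\hat\xi_{k|k})^{\T}\,\p(\xi_k|\Y_k)\,d\xi_k$, split $\xi_k-\hat\xi_{k|k}=(\xi_k-\hat\xi_{i,k|k})+(\hat\xi_{i,k|k}-\hat\xi_{k|k})$ inside each component integral, and expand; the two cross terms vanish because $\int(\xi_k-\hat\xi_{i,k|k})\,\p(\xi_k|\A_i,\Y_k)\,d\xi_k=0$, leaving the weighted sum of $P_{i,k|k}^\xi$ plus the spread-of-means outer products displayed in~\eqref{mas18}.

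I expect the likelihood computation to be the only real obstacle: it is the sole place where the nonlinearity of $h$ enters, so the form of $\Gamma_{i,k}$ rests on the first-order expansion of $h$ and the attendant Gaussian approximation of the predictive density of $y_k$ --- everything else is bookkeeping with Theorem~\ref{Bayesian-Paradigm} and elementary moment identities for Gaussian mixtures. I would also point out that $\lambda_{i,k}$ as stated linearizes $h$ at the \emph{predicted} mean $\hat x_{i,k|k-1}$, whereas Lemma~\ref{Update-Lemma} linearizes at the Newton iterates $\hat x_{i,k|k}^{(\ell)}$; this minor discrepancy merely reflects the approximate character of the construction and does not need to be reconciled.
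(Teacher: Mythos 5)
Your proposal is correct and follows essentially the same route as the paper's proof: identify \eqref{mu-update} and \eqref{Location-Identification} as direct restatements of \eqref{mas5} and \eqref{Location-Identification-Initial}, obtain $\lambda_{i,k}$ by marginalizing $\p(y_k|x_k)\p(x_k|\A_i,\Y_{k-1})$ after linearizing $h$ at $\hat x_{i,k|k-1}$, and read off the fused moments from the mixture \eqref{mas4}. You actually supply more detail than the paper on the fusion step (the explicit cross-term cancellation, which the paper leaves as "follows from \eqref{mas4}"), and your remark about the differing linearization points in this lemma versus Lemma~\ref{Update-Lemma} is an accurate observation about the approximation.
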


\begin{proof}
See Appendix.\ref{Fusion-Proof}.
\end{proof}

Lemmas \ref{Prediction-Lemma}-\ref{Fusion-Lemma} outline the complete design of a computational SSUE algorithm for the system in \eqref{mas1}, which is further  summarized in Table \ref{Algorithm-Summary}. This algorithm is built within the Bayesian estimation framework laid out in~\eqref{Bayesian-Update} to identify the uncertainty, state, and the uncertainty's location.  

\begin{table*}[tbh]
  \centering
    \caption{The  SSUE algorithm based on the Newton's   method.}\label{Algorithm}
  \framebox[0.6\linewidth]{ 
  \begin{minipage}{0.58\linewidth}
      \begin{algorithmic}{\normalsize
    \State  {\bf Initialize:} $k=0$, $\hat{\delta}_{0|0}$,  $\hat{x}_{0|0}$, $P_{0|0}^\delta$, $P_{0|0}^x$,  $P_{0|0}^{\delta x}$
    \Repeat    
    \State $k \leftarrow k+1$
    \\
    \textit{Prediction:}
    \State State and uncertainty prediction via (\ref{Prediction-Formula})
    \\
    \textit{Update:}
    \State Initialize: $\ell = 0$, $\hat{\xi}_{i,k|k}^{(0)}=%
\begin{bmatrix}
\hat{\delta}_{i,k|k-1}^\top & \hat{x}_{i,k|k-1}^\top%
\end{bmatrix}^\top%
$
    \While{$i< \ell_{\max}$ }
    \State State and uncertainty estimation update via~\eqref{Update-Formula}
    \State $i \leftarrow i+1$
    \EndWhile 
    \\
    \textit{Identification of} $\A$:
    \State Evaluate $\p(\A_i | \Y_k)$ and estimate $\A$ via~\eqref{Location-Evaluation}
    \\
    \textit{Fusion:}
    \State Fuse  state and uncertainty estimation via~\eqref{Fused-Estimation}
    \Until{no more measurements arrive}
}
  \end{algorithmic}
\end{minipage}
}\label{Algorithm-Summary}
\end{table*}

\section{Numerical Simulation} \label{Simulation-Example}

In this section, we present a simulation study  that applies the proposed SSUE algorithm to target tracking, with the model taken from~\cite{Sarkka:CUP:2013}. Consider an object is moving in a two-dimensional plane, with three fixed range sensors deployed to measure the distances between the object and the positions of sensors. The state of the object is
\begin{align*}
x = \begin{bmatrix}
d_x & d_y & \dot d_x & \dot d_y
\end{bmatrix}^\T,
\end{align*}
where $d_x$ and $d_y$ are the object's positions in the $x$-$y$ Cartesian coordinates. We suppose that the discrete-time dynamics of $x$ includes parameter uncertainty, which can be expressed in the form of~\eqref{mas1}. Specifically, $\delta=-0.05$, $A$ and the uncertainty set $\mathbb{A}$ are give by
\begin{align*}
A = \begin{bmatrix}
1 & 0 & Ts & 0 \cr0 & 1 & 0 & Ts \cr 0 & 0 & 1 & 0 \cr 0 & 0 & 0 & 1
\end{bmatrix}, \ 
\A_1 = \begin{bmatrix}
0 & 0 & 0 & 1 \cr 0 & 0 & 0 & 0 \cr 0 & 0 & 0 & 0 \cr 0 & 0 & 0 & 0
\end{bmatrix}, \ 
\A_2 = \begin{bmatrix}
1 & 0 & 0 & 0 \cr 0 & 1 & 0 & 0 \cr 0 & 0 & 0 & 0 \cr 0 & 0 & 0 & 0
\end{bmatrix}, \
\A_3 = \begin{bmatrix}
0 & 0 & 0 & 0 \cr 0 & 0 & 0 & 0 \cr 0 & 0 & 1 & 0 \cr 0 & 0 & 0 & 0
\end{bmatrix}.
\end{align*}
We let the true $\A = \A_2$. The noise covariance is
\begin{align*}
Q = \begin{bmatrix}
{1\over 3}Ts^3 & 0 & {1\over 2}Ts^2 & 0 \cr
0 & {1\over 3}Ts^3 & 0 & {1\over 2}Ts^2 \cr
{1\over 2}Ts^2 & 0 & T_s & 0 \cr
0 & {1\over 2}Ts^2 & 0 & T_s
\end{bmatrix}q,
\end{align*}
where $q=0.05$ is the spectral density of the noise. 
The sensors are located at $(s_x^i, s_y^i)$ for $i=1,2,3$. Sensor $i$ produces   measurement of the range 
$\sqrt{(d_x - s_x^i)^2 + (d_y - s_y^i)^2}
$. The measurement noise has a covariance of $R=2$. 

The proposed SSUE algorithm is applied to perform object tracking in the above setting, with the estimation results summarized in Fig.~\ref{Estimation-Results}. Fig.~\ref{Ex1-mu} shows the evolution of $\mu_{i,k} = \p(\A_i | \Y_k)$ for each $\A_i$, which is meant to determine the uncertainty's location. It is seen that $\mu_2$ associated with the true location $\A_2$ will take the lead after certain time, as a  larger amount of measurement becomes available. The estimation of $\delta$ is shown in Fig.~\ref{Ex1-delta}, which approaches the true value gradually through time.  Further, Figs.~\ref{Ex1-x1}-\ref{Ex1-x4} display the state estimation   variables based on the SSUE algorithm as well as the standard EKF without   uncertainty estimation for the purpose of comparison.  The SSUE achieves an overall satisfactory accuracy   in the estimation of all the state variables. The EKF,  produces comparable estimation of the object's position, because the range measurement data is a transformation of the position variables. However, its estimation of the speed variables is poor in accuracy, as illustrated in Figs.~\ref{Ex1-x3}-\ref{Ex1-x4}.  These results illustrate the effectiveness and utility of the proposed SSUE framework and algorithm.

\begin{figure}[t] \centering
    \subfigure[]{
    	\includegraphics[width=0.35\textwidth]{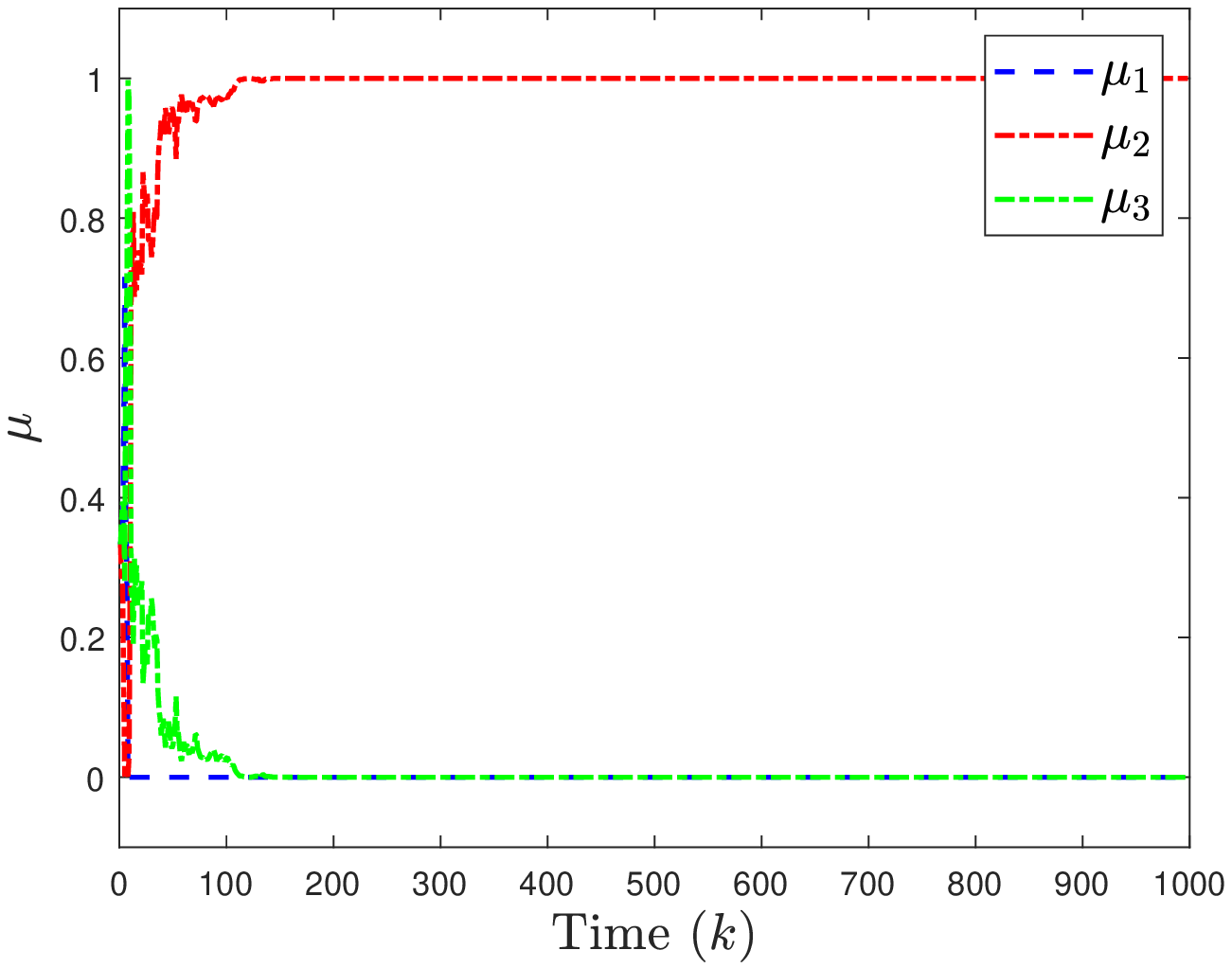}		\label{Ex1-mu}}
    \subfigure[]{
    \includegraphics[width=0.35\textwidth]{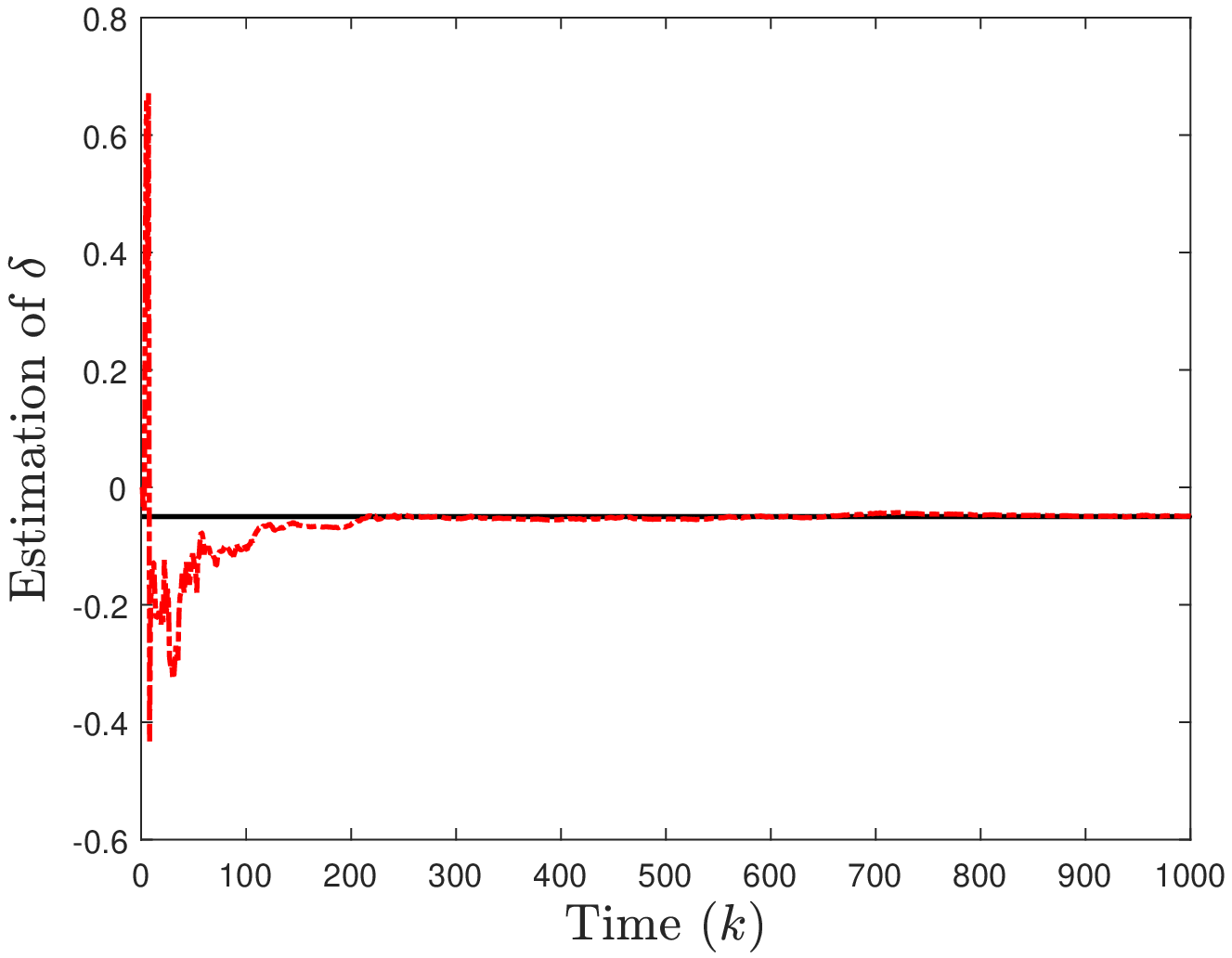}\label{Ex1-delta}}\\
    \subfigure[]{
    \includegraphics[width=0.35\textwidth]{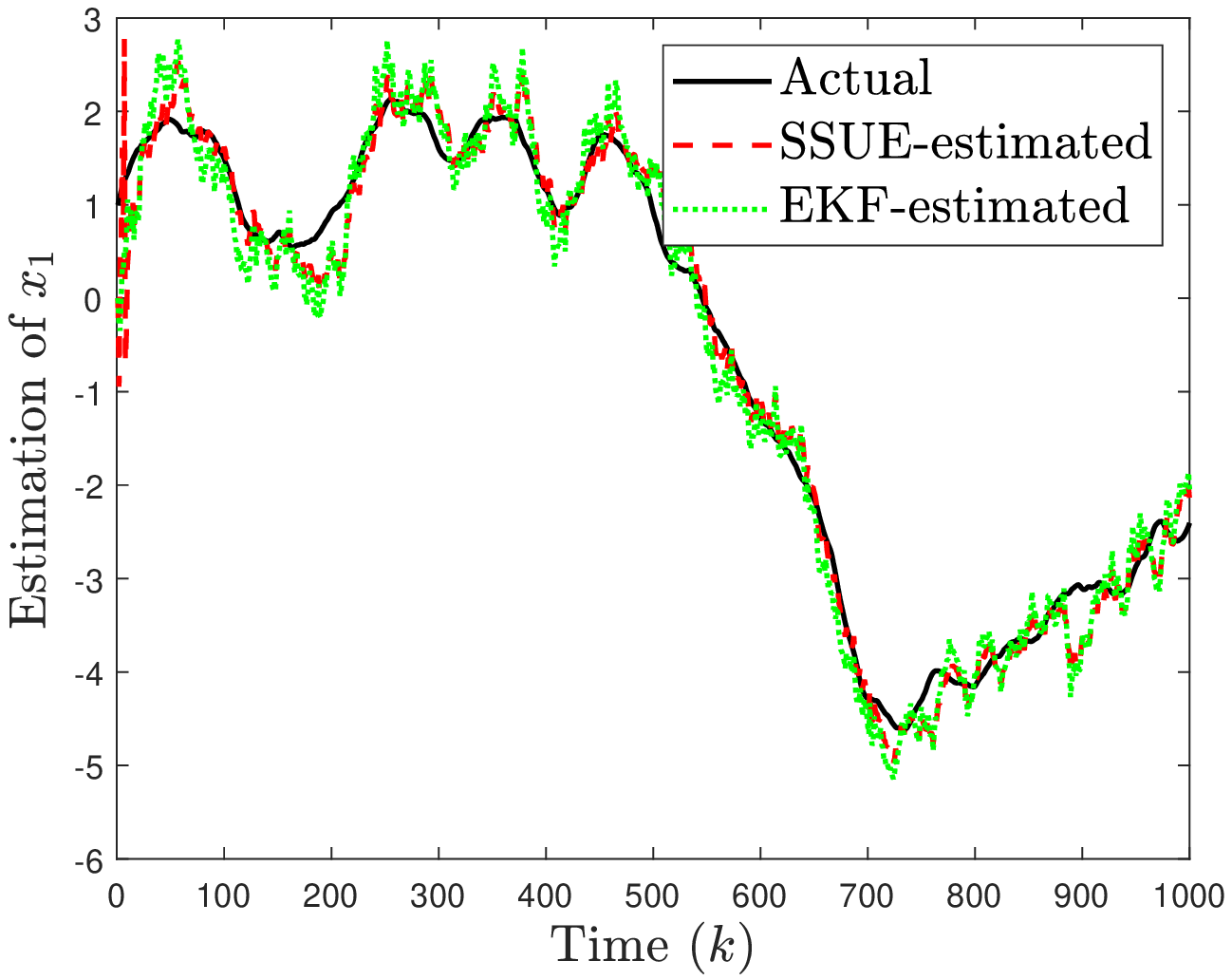}\label{Ex1-x1}}  
    \subfigure[]{
    \includegraphics[width=0.35\textwidth]{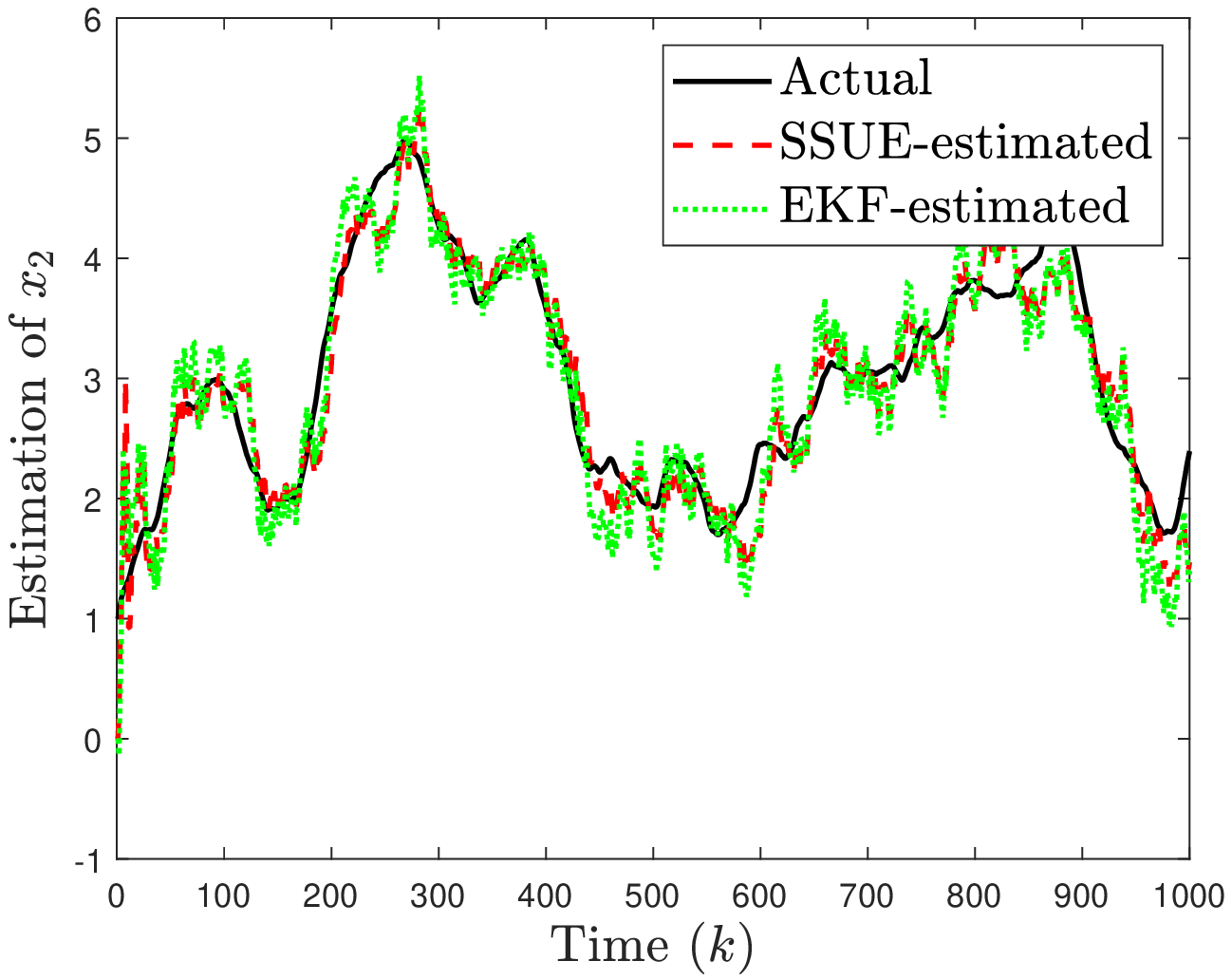}\label{Ex1-x2}} \\
    \subfigure[]{
    \includegraphics[width=0.35\textwidth]{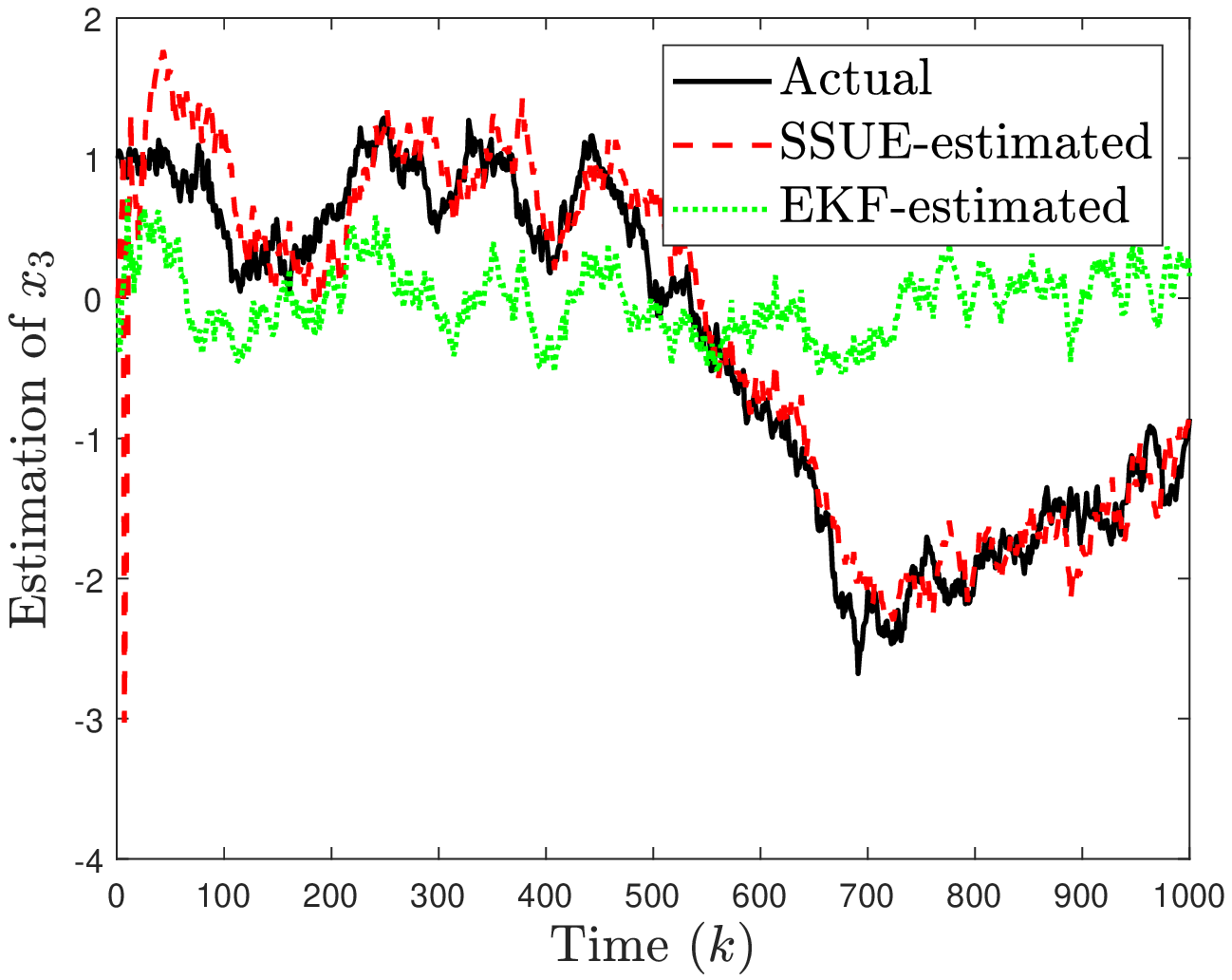}\label{Ex1-x3}} 
    \subfigure[]{
    \includegraphics[width=0.35\textwidth]{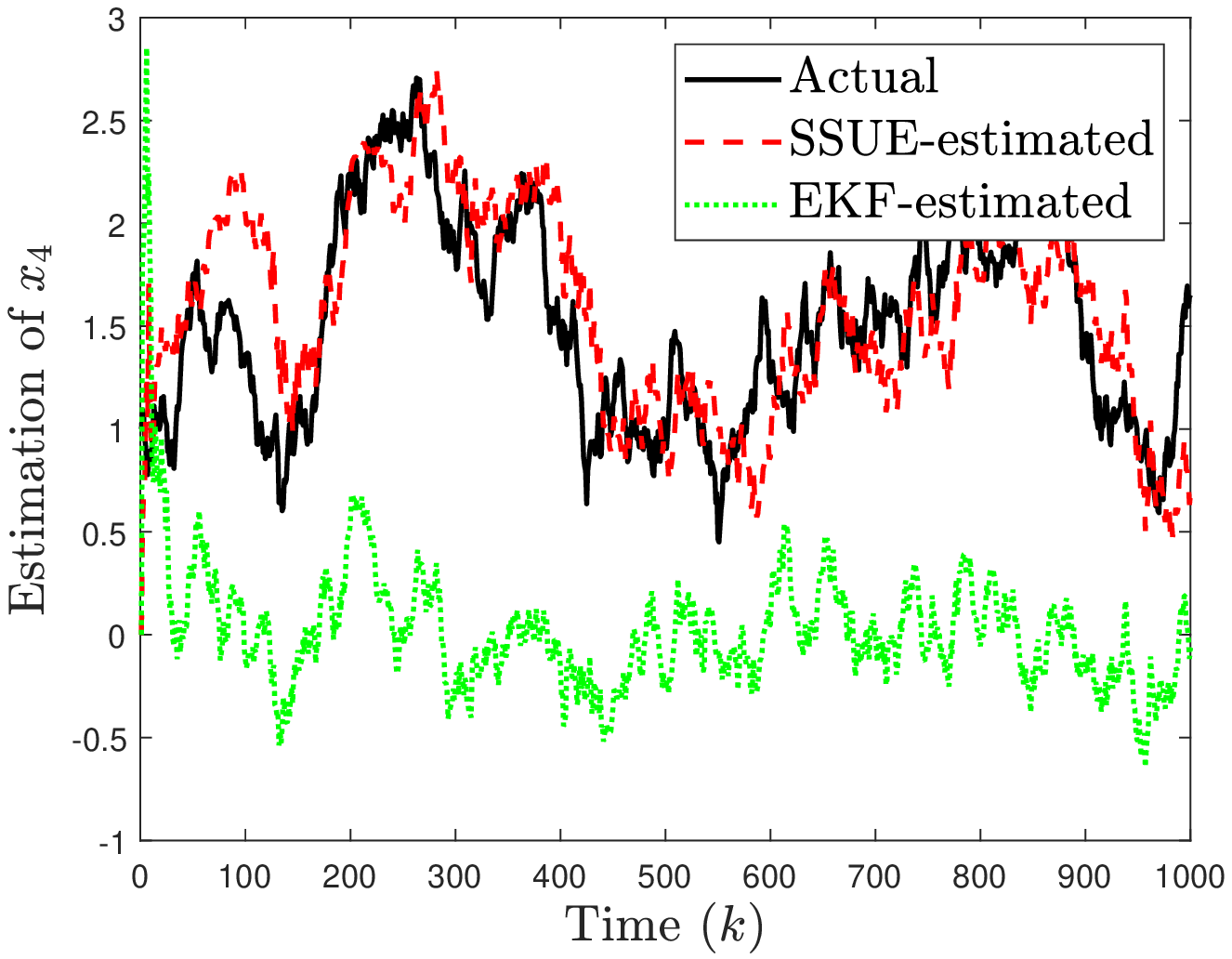}\label{Ex1-x4}} 
  \caption{Simulation with object tracking using range sensors: (a) $\mu_i = \p(\A_i | \Y_k)$; (b) estimation of the  uncertainty $\delta$; (c) estimation of $x_1$; (d) estimation of $x_2$; (e) estimation of $x_3$; (f) estimation of $x_4$.}
\label{Estimation-Results}
\end{figure}

\section{Conclusions}\label{Conclusion}

This paper considered estimation for an uncertain dynamical system and studied the SSUE problem concerned with simultaneously estimating the state, parameter uncertainty, and uncertainty's location hidden in a model.  Based on the thinking of probabilistic estimation, we constructed a Bayesian framework which can update the pdf's of the uncertainty and state conditioned on measurement data and evaluate the probabilities of the possible locations of the uncertainty. Building on framework, we developed  a computational algorithm to perform SSUE as desired, which was based on MAP estimation and the Newton's method for numerical optimization. We investigated the observability for SSUE  and further revealed the relation between observability and consistency of the estimation of the uncertainty's location. A simulation example demonstrated the effectiveness of the proposed SSUE design.

\section*{Appendix}

\renewcommand{\thesubsection}{\Alph{subsection}}

\setcounter{equation}{0}

\renewcommand{\theequation}{IA.\arabic{equation}}

\subsection{Preliminary}

\begin{lemma}\label{AAT-BBT}
\cite{Horn:CUP:2013}For $A,B \in \mathbb{R}^{m\times n}$,  $AA^\T = BB^\T$ if and
 only if there is a unitary $U\in \mathbb{R}^{n \times n}$ such that $A=BU$.
\end{lemma}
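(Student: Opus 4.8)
The plan is to establish the two directions of the equivalence separately, with the ``if'' direction being immediate and the converse carrying all the content. For the easy direction: if $A = BU$ with $U \in \mathbb{R}^{n\times n}$ orthogonal, then $AA^\top = B\,UU^\top B^\top = BB^\top$ since $UU^\top = I$; I would dispose of this in a single line.

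For the converse, suppose $AA^\top = BB^\top$. The observation I would build everything on is that for every $x \in \mathbb{R}^m$,
\[
\|A^\top x\|^2 = x^\top A A^\top x = x^\top B B^\top x = \|B^\top x\|^2 ,
\]
so $A^\top$ and $B^\top$ act with identical norms on all of $\mathbb{R}^m$. First this yields $\ker A^\top = \ker B^\top$ (if one side annihilates $x$, the displayed identity forces the other to), hence $\mathrm{range}(A^\top)$ and $\mathrm{range}(B^\top)$ are subspaces of $\mathbb{R}^n$ of the common dimension $r = \mathrm{rank}(A) = \mathrm{rank}(B)$. I would then define $T : \mathrm{range}(B^\top) \to \mathrm{range}(A^\top)$ by $T(B^\top x) = A^\top x$; the same identity shows $T$ is well defined (equal inputs give $B^\top(x-x')=0$, hence $A^\top(x-x')=0$) and is a linear isometry onto $\mathrm{range}(A^\top)$.

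The last step is to extend $T$ to an orthogonal operator on all of $\mathbb{R}^n$: the complements $(\mathrm{range}\,B^\top)^\perp$ and $(\mathrm{range}\,A^\top)^\perp$ both have dimension $n-r$, so any isometry between them (obtained by matching orthonormal bases) together with $T$ defines an orthogonal operator on $\mathbb{R}^n$, whose matrix I call $U^\top$. By construction $U^\top B^\top x = A^\top x$ for all $x$, i.e. $U^\top B^\top = A^\top$, and transposing gives $A = BU$ with $U$ orthogonal, completing the converse.

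The step needing the most care is exactly the passage from the pointwise norm identity to a single, globally defined orthogonal matrix — namely checking that $T$ is well defined and that a partial isometry between equidimensional subspaces extends to a full orthogonal map; both are handled by the displayed identity and a dimension count, so I do not anticipate a genuine obstacle. An alternative route I considered is to argue through the singular value decompositions of $A$ and $B$, matching singular values and left singular subspaces, but that forces a case split over repeated singular values that the isometry argument sidesteps. Since this is a standard fact, I would keep the written version to the short argument above or simply defer to~\cite{Horn:CUP:2013}.
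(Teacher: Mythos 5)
Your argument is correct and complete. Note, however, that the paper does not actually prove this lemma: it is stated as a preliminary fact and dispatched entirely by the citation to Horn and Johnson, so there is no in-paper proof to compare against. Your self-contained route is the standard one and is sound: the forward direction is the one-line computation $BUU^\top B^\top = BB^\top$, and for the converse the identity $\|A^\top x\|^2 = x^\top AA^\top x = x^\top BB^\top x = \|B^\top x\|^2$ gives $\ker A^\top = \ker B^\top$, makes the map $T(B^\top x) = A^\top x$ well defined, and (via polarization) makes it an inner-product-preserving bijection of $\mathrm{range}(B^\top)$ onto $\mathrm{range}(A^\top)$; since the orthogonal complements share dimension $n-r$, matching orthonormal bases extends $T$ to an orthogonal matrix $V$ with $VB^\top = A^\top$, whence $A = BV^\top$. (``Unitary'' in the real setting of the statement just means orthogonal, as you implicitly assume.) Compared with the SVD route you mention setting aside --- which is closer to how the cited reference argues, via matching singular values and singular subspaces --- your isometry-extension argument avoids any case analysis on repeated singular values and handles arbitrary rank uniformly; the SVD route, in exchange, produces the orthogonal factor more explicitly. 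Either is acceptable; deferring to the reference, as the paper does, is also fine for a fact this standard.
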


\subsection{Proof of Theorem~1} \label{Bayesian-Paradigm-Proof}

\begin{proof}
By the Chapman-Kolmogorov equation and Bayes' rule, we have
\begin{align*}
\p(\delta ,x_{k}|\mathscr{A}_{i},\mathcal{Y}_{k-1})=
\int \p(\delta, x_{k}
,x_{k-1}|\A_i,\mathcal{Y}_{k-1} ) dx_{k-1} ,
\end{align*}
which, according to   Bayes' rule, can reduce to~\eqref{mas6}, i.e.,
\begin{align*}
\p(\delta ,x_{k}|\mathscr{A}_{i},\mathcal{Y}_{k-1})
&= \int \frac{p(\delta ,x_{k}, x_{k-1},\mathscr{A}_{i},\mathcal{Y}_{k-1})}{p(\mathscr{A}_{i},\mathcal{Y}_{k-1})} dx_{k-1}\\
&= \int \frac{\p(x_{k} | \delta, x_{k-1},\mathscr{A}_{i},\mathcal{Y}_{k-1})   p( \delta, x_{k-1},\mathscr{A}_{i},\mathcal{Y}_{k-1})}{p(\mathscr{A}_{i},\mathcal{Y}_{k-1})} dx_{k-1}\\
&=\int \p( x_{k}|\delta ,x_{k-1}, \A_i  )   \p( \delta ,x_{k-1}|\mathscr{A}_{i},
\mathcal{Y}_{k-1} ) d x_{k-1}.
\end{align*}
The derivation uses $\p( x_{k}|\delta
,x_{k-1}, \A_i )  = \p( x_{k}|\delta
,x_{k-1},\A_i,\mathcal{Y}_{k-1} ) $, which results from the Markovian dependence of $x_{k}$  on $\delta $ and $x_{k-1}$ given $\A_i$.  When  $y_{k}$ is available, it can be used to update $\p(\delta ,x_{k}|\mathscr{A}_{i},\mathcal{Y}_{k-1})$ to $\p(\delta ,x_{k}|\mathscr{A}_{i},\mathcal{Y}_{k})$.  Using   Bayes' rule again, we can obtain
\begin{align*}
\p ( \delta ,x_{k}|\mathscr{A}_{i},\mathcal{Y}_{k} ) & =\frac{%
\p ( y_{k}|\delta ,x_{k},\mathscr{A}_{i},\mathcal{Y}_{k-1} ) \p(
\delta ,x_{k}|\mathscr{A}_{i},\mathcal{Y}_{k-1} ) }{\p ( y_{k}|%
\mathcal{Y}_{k-1} ) }~~  \notag \\
& =\frac{\p ( y_{k}|x_{k},\mathscr{A}_{i} ) \p ( \delta ,x_{k}|%
\mathscr{A}_{i},\mathcal{Y}_{k-1} ) }{\p ( y_{k}|\mathcal{Y}%
_{k-1} ) }  \notag \\
& \varpropto \p ( y_{k}|x_{k}  ) \p ( \delta
,x_{k}|\mathscr{A}_{i},\mathcal{Y}_{k-1} ),
\end{align*}%
which verifies~\eqref{mas7}. In above, the second equality is due to the  dependence of $y_{k}$ on $%
x_{k}$. Note that $\p ( y_{k}|\mathcal{Y}_{k-1} )
$ only plays a role of proportional coefficient. Finally,
\eqref{mas5}-\eqref{mas4} follow directly from the law of total probability and Bayes' rule. 
\end{proof}

\subsection{Proof of Theorem~3}\label{Consistency-Proof}
\begin{proof}
Consider the system in~\eqref{mas1}. Because $x_0 \sim \mathcal{N} (0, P_0)$, $w_k \sim \mathcal{N} (0, Q_k)$, and $v_k \sim \mathcal{N} (0, R_k)$  are independent of each other,  we can obtain 
\begin{equation*}
\mathcal{Y}_{k}=\mathcal{O}_{k}(\delta ,\mathscr{A})x_{0}+\mathcal{I}%
_{k} (\delta ,\mathscr{A}  )W_{k}+V_{k},
\end{equation*}%
where%
\begin{align*}
\mathcal{I}_{k}(\delta,\mathscr{A})=&%
\begin{bmatrix}
0 & 0 & 0 & \cdots & 0 \\
C & 0 & 0 & \cdots & 0 \\
C(A+\delta \mathscr{A}) & C & 0 & \cdots & 0 \\
\vdots & \vdots & \vdots & \ddots & \vdots \\
C(A+\delta \mathscr{A})^{k-1} & C(A+\delta \mathscr{A})^{k-2} & C(A+\delta \mathscr{A})^{k-3}  & \cdots &
C%
\end{bmatrix},
\\
W_{k}=&%
\begin{bmatrix}
w _{0}^{\T} & w _{1}^{\T} & \cdots & w _{k-1}^{\T}
\end{bmatrix}
^{\T},V_{k}=%
\begin{bmatrix}
v _{0}^{\T} & v _{1}^{\T} & \cdots & v _{k}^{\T}%
\end{bmatrix}%
^{\T}.
\end{align*}
It is seen that $\Y_k$ follows a zero-mean Gaussian distribution.
Let $\Sigma_k$  denote its covariance, i.e., $\Sigma _{k} = \mathbb{E}\left\{ \mathcal{Y}_{k}\mathcal{Y}_{k}^{\T} \right\}$. Then, 
\begin{align*}
\Sigma _{k}(\delta ,\mathscr{A})&= \Pi_k(\delta ,\mathscr{A}) \Omega_k \Pi_k(\delta ,\mathscr{A})^\T +\mathcal{R}_k,
\end{align*}%
where
\begin{align*}
\Pi_k(\delta ,\mathscr{A}) = 
\begin{bmatrix} 
\mathcal{O}_{k}(\delta ,\mathscr{A}) &  \mathcal{I}_{k}(\delta ,\mathscr{A}) 
\end{bmatrix}, \ 
\Omega_k = \mathrm{diag} \left( P_0, Q_1, Q_2,\ldots,Q_k \right),
\end{align*}
where $\mathcal{Q}_{k}=\mathrm{diag}(Q_0,Q_1,\ldots,Q_{k-1})$,$\mathcal{R}%
_{k}= \mathrm{diag}(R_{0},R_{1},\ldots,R_{k})$. Obviously, $\Omega_k \geq 0$ and  is a real symmetric matrix since $P_0>0$ and $Q_k \geq 0$. It thus admits eigen-decomposition, $\Omega_k = U\Lambda U^\T$, where $U$ is an orthogonal matrix whose columns are the eigenvectors of $\Omega_k$ and $\Lambda$  is a diagonal matrix whose entries are the eigenvalues of $\Omega_k$. Note that the first $n$ diagonal entries of $\Lambda$, corresponding to the eigenvalues of $P_0$, must be positive. 
We then have $\Sigma _{k}(\delta ,\A ) = \Phi \Phi^\top + \mathcal{R}_k$, where $\Phi =  \Pi _{k}(\delta ,\A )U \Lambda^{1 \over 2} $. 
Now, consider $\A_i,\A_j \in \mathbb{A}$ with $i\neq j$, and without loss of generality, consider the same $\delta$ for both $\A_i$ and $\A_j$. By  Theorem~\ref{Observability-testing},  there exists $K$ such that, for $k \geq K$,
\begin{align*}
\text{rank}\left(
\begin{bmatrix}
\mathcal{O}_{k}(\delta,\mathscr{A}_i) & \mathcal{O}_{k}(\delta,%
\mathscr{A}_j)%
\end{bmatrix}%
\right)=2n  \ \Rightarrow \ \text{rank}\left(
\begin{bmatrix}
\Pi_{k}(\delta,\mathscr{A}_i) & \Pi_{k}(\delta,
\mathscr{A}_j)%
\end{bmatrix}%
\right) \geq 2n.
\end{align*}
Given this and using Lemma~\ref{AAT-BBT}, we obtain 
\begin{align}\label{Sigma-Compare}
\Sigma _{k}(\delta , \A_i) \neq \Sigma _{k}(\delta , \A_j),
\end{align}
which suggests that the measurement sequences corresponding to $\A_i$ and $\A_j$ have different covariances.  
For notational convenience, we write  $\Sigma _{k}(\delta , \A_i)$ as $\Sigma _{k}^i$. We hence have $\Sigma_k^t \neq \Sigma_k^i$ for $t \neq i$, and then from the properties of the KL divergence~\cite{Kullback}, it   follows  that    $\D(\mathscr{A}_{t} \| \mathscr{A}_{i})> 0$. This establishes~\eqref{KL-Divergence}.

Because
\begin{align*}
h_{i}^{t} (\mathcal{Y}_{k}  )=\frac{\p ( \mathcal{Y}_{k}|\mathscr{A}_{t} ) }{\p (
\mathcal{Y}_{k}|\mathscr{A}_{i} ) } 
=\prod_{\tau =1}^{k}\frac{\p( y_{\tau }|%
\mathscr{A}_{t},\mathcal{Y}_{\tau -1}
) }{\p ( y_{\tau }|\mathscr{A}%
_{i},\mathcal{Y}_{\tau -1} ) }=\prod_{\tau =1}^{k}h_{i}^{t} (y_{\tau } \, | \,
\mathcal{Y}_{\tau -1} ),
\end{align*}
we have
\begin{align} \label{ratio-t-i-Y_k}
 \log h_{i}^{t}  (%
\mathcal{Y}_{k}  ) = \sum_{\tau=1}^k \log h_{i}^{t} (y_{\tau } \, | \, \mathcal{%
Y}_{\tau -1}  ).
\end{align}
Assumption \ref{assum2} stipulates that the sequence $\left\{ \log h_{i}^{t}  (y_{\tau } \, | \, \mathcal{%
Y}_{\tau -1}  ) \right\}$ is ergodic. This implies
\begin{align*}
\lim_{k \rightarrow \infty }\frac{1}{k}%
\sum_{\tau =1}^{k}\log h_{i}^{t}  (y_{\tau} \, | \, \mathcal{Y}_{\tau-1}  )
= \mathbb{E}_{t} \left \{ \log h_{i}^{t}  (y_{\tau } \, | \, \mathcal{Y}_{\tau
-1}  ) \right\}  \geq 0,
\end{align*}
from which one can find out that
\begin{align}\label{ratio-t-i-Y_k-Y_k-1}
\lim_{k\rightarrow \infty}\sum_{\tau =1}^{k}\log h_{i}^{t} (y_{\tau} \, | \, \mathcal{Y}_{\tau-1}  )>0.
\end{align}
Note that the case does not hold that $\lim_{k\rightarrow \infty}\sum_{\tau =1}^{k}\log h_{i}^{t}  (y_{\tau} \, | \, \mathcal{Y}_{\tau-1}  )=0$, because   
\begin{align*}
\D(\mathscr{A}_{t} \| \mathscr{A}_{i})=& \mathbb{E}_{t} \left(  \sum_{\tau =1}^{k} \log h_{i}^{t} (y_{\tau } \, | \, \mathcal{%
Y}_{\tau -1} )\right)>0 .
\end{align*}
By~\eqref{ratio-t-i-Y_k}-\eqref{ratio-t-i-Y_k-Y_k-1}, we obtain
\begin{align*}
\lim_{k\rightarrow \infty} \log h_{i}^{t} (\mathcal{Y}_{k}  ) >0.
\end{align*}
This indicates  that $\lim_{k\rightarrow \infty}   h_{i}^{t} \left(\mathcal{Y}_{k} \right)>1 $, verifying~\eqref{Ratio-Larger-than-1}.

Let us continue to consider the  sequence $\left\{ h_{t}^{i} (
\mathcal{Y}_{k} )\right\} $. 
It is a positive discrete-time martingale with respect to $\Y_k$~\cite{Doob}. Then, by  the martingale convergence theorem, there exists a random variable $ 0 \leq H_i <\infty$ such that $h_{t}^{i} (\mathcal{Y}_{k}  )$ converges to $H_i$ a.s. as $k \rightarrow \infty$, i.e.,
\begin{align*}
\lim_{k \rightarrow \infty} h_{t}^{i}  (
\mathcal{Y}_{k}  ) = H_i, \ \mathrm{a.s.}
\end{align*}
Let us now prove that $H_i=0$ by contradiction. Assume that $H_i\neq 0$. This implies that the probability of the following event is nonzero: 
\begin{align*}
\lim_{k \rightarrow \infty} h_{t}^{i}  ( y_k \, | 
 \, \mathcal{Y}_{k-1}  ) = \lim_{k \rightarrow \infty} \frac{ h_{t}^{i}  ( \Y_k  )}{ h_{t}^{i}  ( 
\mathcal{Y}_{k-1}  ) } = 1. 
\end{align*}
This then gives rise to
\begin{align*}
\lim_{k \rightarrow \infty} \frac{\p(y_{k+K},  \ldots, y_{k+1}| \Y_k, A_t )}{\p(y_{k+K}, \ldots, y_{k+1}| \Y_k, A_i) } =  \lim_{k \rightarrow \infty} \prod_{\tau=k+1}^{k+K } h_i^t  (y_{\tau} | \Y_{\tau-1}  ) = 1,
\end{align*}
which shows that the KL divergence between $\p(y_{k+K},  \ldots, y_{k+1}| \Y_k, A_t )$ and $\p(y_{k+K}, \ldots, y_{k+1}| \Y_k, A_i)$ will approach zero as $k \rightarrow \infty$. This, however, can be proven untrue along  similar lines to  the proof of~\eqref{KL-Divergence}. Therefore, $H_i=0$ and $\lim_{k \rightarrow \infty} h_{t}^{i}  (\mathcal{Y}_{k}  )$, a.s. This then yields~\eqref{Ratio-infty}.  
\end{proof}

\subsection{Proof of Lemma~1}\label{Prediction-Proof}

\begin{proof} Applying the first-order Taylor series expansion to the right-hand side of  the process equation,  we have
\begin{align*}
x_k \approx (A+\hat \delta_{k-1|k-1} \A_i) \hat x_{i, k-1|k-1} + F_{k-1} \begin{bmatrix}
\delta - \hat \delta_{k-1|k-1} \cr x_k - \hat x_{k-1|k-1}
\end{bmatrix} + w_k.
\end{align*}
Now $x_k$ is approximated as a linear transformation of $\delta$ and $x_k$. Then,  $\p(x_k | \A_i, \Y_{k-1})$   can be approximated by a Gaussian one based on~\eqref{P_delta_x_Y_k_Gaussian}. We can  then use~\eqref{mas6} to determine that 
\begin{align*}
\p(\delta, x_k | \A_i, \Y_{k-1}) = \mathcal{N} \left( 
\begin{bmatrix}
\hat \delta_{i,k|k-1} \cr \hat x_{i, k|k-1} 
\end{bmatrix},
\begin{bmatrix}
P_{i,k|k-1}^{\delta } & P_{i,k|k-1}^{\delta x } \cr 
\left( P_{i,k|k-1}^{\delta  x}\right)^\T & P_{i,k|k-1}^{x }
\end{bmatrix}
\right).
\end{align*}
The solution to~\eqref{Prediction-MAP} thus is given by the mean of $\p(\delta, x_k | \A_i, \Y_{k-1})$, leading to the results in this lemma.
\end{proof}

\subsection{Proof of Lemma~2} \label{Update-Proof}

\begin{proof}
Based on~\eqref{mas7} and  Assumption~\ref{assum4},  we can derive the log-MAP cost function for $p( \A_i | \Y_{k})$  as 
\begin{equation*}
L_i(\xi_k )=\alpha
_{k}^{\T}R_{k}^{-1}\alpha _{k}+\beta_{i,k}^{\T} \left(P_{k|k-1}^\xi\right)^{-1}\beta_{i,k},
\end{equation*}%
where $
\alpha _{k}=y_{k}-h (x_{k})$ and
$\beta _{i,k}=
\xi_{k}-\hat{\xi}_{i,k|k-1}$.
The problem in~\eqref{Update-Estimation} then becomes
\begin{align}\label{Update-Min-Problem}
\hat \xi_{i,k|k} = \arg \min_{\xi_k} L(\xi_k).
\end{align}
Note that one can write $L_i(\xi_k)$  as $L_i(\xi_k) = r_i^\T (\xi_{i,k}) \cdot r_i (\xi_{i,k})$, implying that~\eqref{Update-Min-Problem} is a nonlinear least squares problem, which can be addressed using the  Newton's method as shown in~\eqref{Newton-Iterative-Update}.

The estimation error covariance of an MAP estimator is the inverse of the Fisher information matrix. Hence,
\begin{align*}
P_{i, k|k}^\xi = \mathcal{F}^{-1} (\xi_{i, k|k}),
\end{align*}
where $\mathcal{F}$ is the Fisher information matrix defined as
\begin{align*}
\mathcal{F} = \mathbb{E} \left\{ \nabla_\xi L_i  \cdot \nabla_\xi ^\T L_i \right\}.
\end{align*}
Because $ \nabla_\xi L_i =  \nabla_\xi^\T r_i  \cdot r_i$, we have
\begin{align*}
\mathcal{F} = \nabla _{\xi
}^{\T}r_{i} \cdot \nabla _{\xi }r_{i},
\end{align*}
which leads to~\eqref{mas16}.
\end{proof}

\subsection{Proof of Lemma~3}\label{Fusion-Proof}

\begin{proof} 
Define $\mu_{i,k} =  \p(\A_i | \Y_k)$ and $\lambda_{i,k}  = \p(y_k | \A_i , \Y_{k-1})$. This implies~\eqref{Location-Identification} from~\eqref{Location-Identification-Initial}.  
Further, according to Assumption 4, we have $\p(y_{k}|x_{k})  \sim \mathcal{N}\left( h(x_{k}),R_{k}\right) $. It can be approximated as 
\begin{align}\label{y_x_prob}
\p(y_k | x_k  )  \sim \mathcal{N} \left(h(\hat x_{i, k|k-1}) +  \left.   \nabla_x h \right|_{\hat{x}_{i, k|k-1}} \left( x_k - \hat x_{i,k|k-1} \right),    R_k  \right),
\end{align}
 if $h(x_k)$ is Taylor-expanded as
$
h(x_k) \approx h (\hat x_{i, k|k-1}) +   \left.  \nabla_x h \right|_{\hat{x}_{i, k|k-1}} \left( x_k - \hat x_{i,k|k-1} \right)
$. 
Meanwhile, Assumption 4 also indicates 
\begin{align}\label{x_Y_prob}
\p (x_k | \A_i,  \Y_{k-1})  \sim \mathcal{N}\left( \hat x_{i,k|k-1}, P_{i,k|k-1}^x \right) .
\end{align}
Based on~\eqref{y_x_prob}-\eqref{x_Y_prob}, we can readily obtain  
\begin{align*}
\lambda_{i,k} &=   \p(y_k | \A_i , \Y_{k-1}) = \int  \p(y_k | x_k ) \p(x_k |   \A_i , \Y_{k-1}) d x_k  \sim \mathcal{N}\left(   h(\hat x_{i, k|k-1}),  \Gamma_{i, k } \right),
\end{align*}
which gives~\eqref{lambda-update}. By~\eqref{mas5}, the update of $\mu_{i,k}$   then is governed by~\eqref{mu-update}. Finally,~\eqref{Fused-Estimation}   follows from~\eqref{mas4}. 
\end{proof}

  \bibliographystyle{WileyNJD-Harvard}

\bibliography{References}

\begin{thebibliography}{10}
\providecommand \doibase [0]{http://dx.doi.org/}%

\bibitem{Petersen:1999:RKF}
Petersen IR, Savkin AV. {\it {R}obust {K}alman {F}iltering for {S}ignals and
  {S}ystems with {L}arge {U}ncertainties}.
\newblock Boston, MA: Birkhauser.
\newblock 1st~ed. 1999.

\bibitem{Kim:Springer:2017}
Kim NH, An D, Choi JH. {\it Prognostics and Health Management of Engineering
  Systems: An Introduction}.
\newblock Springer International Publishing.
\newblock 1st~ed. 2017.

\bibitem{Chen:TSMC:2019}
{Chen} W, {Ding} D, {Dong} H, {Wei} G. Distributed resilient filtering for
  power systems subject to denial-of-service attacks. {\it IEEE Transactions on
  Systems, Man, and Cybernetics: Systems} 2019\string; 49(8)\string: 1688-1697.

\bibitem{Ding:TII:2019}
{Ding} D, {Han} Q, {Wang} Z, {Ge} X. A survey on model-based distributed
  control and filtering for industrial cyber-physical systems. {\it IEEE
  Transactions on Industrial Informatics} 2019\string; 15(5)\string: 2483-2499.

\bibitem{Xie:TAC:1994}
Xie L, Soh YC, Souza dCE. Robust {K}alman filtering for uncertain discrete-time
  systems. {\it IEEE Transactions on Automatic Control} 1994\string;
  39(6)\string: 1310-1314.

\bibitem{Xie:SCL:1994}
Xie L, Soh YC. Robust {K}alman filtering for uncertain systems. {\it Systems \&
  Control Letters} 1994\string; 22(2)\string: 123-129.

\bibitem{Yang:TAC:2002}
Yang F, Wang Z, Hung YS. Robust {K}alman filtering for discrete time-varying
  uncertain systems with multiplicative noises. {\it IEEE Transactions on
  Automatic Control} 2002\string; 47(7)\string: 1179-1183.

\bibitem{Fu:TSP:2001}
Fu M, {De Souza} CE, Luo ZQ. Finite-horizon robust {K}alman filter design. {\it
  IEEE Transactions on Signal Processing} 2001\string; 49(9)\string: 2103-2112.

\bibitem{Bernstein:SCL:1989}
Bernstein DS, Haddad WM. Steady-state {K}alman filtering with an ${H}_{\infty}$
  error bounded. {\it Systems \& Control Letters} 1989\string; 12(1)\string:
  9-16.

\bibitem{Nagpal:TAC:1991}
Nagpal KM, Khargonekar PP. Filtering and smoothing in an $\mathcal{H}_\infty$
  setting. {\it IEEE Transactions on Automatic Control} 1991\string;
  36(2)\string: 152-166.

\bibitem{Geromel:SIAM:2000}
Geromel JC, Bernussou J, Garcia G, Oliveira MCD. $\mathcal{H}_2$ and
  $\mathcal{H}_\infty$ robust filtering for discrete-time linear systems. {\it
  SIAM Journal on Control and Optimization} 2000\string; 38(5)\string:
  1353-1368.

\bibitem{Gao:TAC:2008}
Gao H, Meng X, Chen T. A parameter-dependent approach to robust
  $\mathcal{H}_{\infty }$ filtering for time-delay systems. {\it IEEE
  Transactions on Automatic Control} 2008\string; 53(10)\string: 2420-2425.

\bibitem{Li:AUTO:2012}
Li H, Shi Y. Robust $\mathcal{H}_\infty$ filtering for nonlinear stochastic
  systems with uncertainties and {M}arkov delays. {\it Automatica} 2012\string;
  48(1)\string: 159-166.

\bibitem{Chen:TCSI:2002}
Chen X, Zhou K. $\mathcal{H}_\infty$ {G}aussian filter on infinite time
  horizon. {\it IEEE Transactions on Circuits and Systems
  \uppercase\expandafter{\romannumeral1}: Fundamental Theory and Applications}
  2002\string; 49(5)\string: 674-679.

\bibitem{Khargonekar:IJRNC:1996}
Khargonekar PP, Rotea MA, Baeyens E. Mixed $\mathcal{H}_2/\mathcal{H}_\infty$
  filtering. {\it International Journal of Robust and Nonlinear Control}
  1996\string; 6(4)\string: 313-330.

\bibitem{Durola:CDC:2008}
Durola S, Danes P, Coutinho DF. Set-membership filtering of uncertain
  discrete-time rational systems through recursive algebraic representations
  and {LMI}s. {\it Proceedings of the 47th IEEE Conference on Decision and
  Control} 2008\string: 684-689.

\bibitem{Calafiore:AUTO:2004}
Calafiore G, Ghaoui LE. Ellipsoidal bounds for uncertain linear equations and
  dynamical systems. {\it Automatica} 2004\string; 40(5)\string: 773-787.

\bibitem{Ghaoui:TAC:2001}
Ghaoui LE, Calafiore G. Robust filtering for discrete-time systems with bounded
  noise and parametric uncertainty. {\it IEEE Transactions on Automatic
  Control} 2001\string; 46(7)\string: 1084-1089.

\bibitem{Geromel:TSP:1999}
Geromel JC. Optimal linear filtering under parameter uncertainty. {\it IEEE
  Transactions on Signal Processing} 1999\string; 47(1)\string: 168-175.

\bibitem{Moheimani:TCSI:1998}
Moheimani SOR, Savkin AV, Petersen IR. Robust filtering, prediction, smoothing,
  and observability of uncertain systems. {\it IEEE Transactions on Circuits
  and Systems \uppercase\expandafter{\romannumeral1}: Fundamental Theory and
  Applications} 1998\string; 45(4)\string: 446-457.

\bibitem{Ching:PEM:2006}
Ching J, Beck JL, Porter KA. Bayesian state and parameter estimation of
  uncertain dynamical systems. {\it Probabilistic Engineering Mechanics}
  2006\string; 21(1)\string: 81-96.

\bibitem{Chowdhary:AST:2010}
Chowdhary G, Jategaonkar R. Aerodynamic parameter estimation from flight data
  applying extended and unscented {K}alman filter. {\it Aerospace Science and
  Technology} 2010\string; 14(2)\string: 106-117.

\bibitem{Chen:JPC:2005}
Chen T, Morris J, Martin E. Particle filters for state and parameter estimation
  in batch processes. {\it Journal of Process Control} 2005\string;
  15(6)\string: 665-673.

\bibitem{Wan:Wiley:2001b}
Wan EA, Nelson AT. {\it {K}alman {F}iltering and {N}eural {N}etworks}.
\newblock New York, NY: John Wiley \& Sons .
\newblock 2001.

\bibitem{Wan:NIPS:2000}
Wan EA, Merwe RVD, Nelson AT. Dual Estimation and the Unscented Transformation.
  {\it Proceedings of Advances in Neural Information Processing Systems}
  2000\string: 666-672.

\bibitem{Vandyke:AAS:2004}
Vandyke MC, Schwartz JL, Hall CD. Unscented {K}alman Filtering for Spacecraft
  Attitude State and Parameter Estimation. {\it Proceedings of the AAS/AIAA
  Space Flight Mechanics Conference} 2004\string: 1-13.

\bibitem{Moradkhani:AWR:2005}
Moradkhani H, Sorooshian S, Gupta HV, Houser PR. Dual state-parameter
  estimation of hydrological models using ensemble {K}alman filter. {\it
  Advances in Water Resources} 2005\string; 28(2)\string: 135-147.

\bibitem{Mendel:TAC:1977}
Mendel J. White-noise estimators for seismic data processing in oil
  exploration. {\it IEEE Transactions on Automatic Control} 1977\string;
  22(5)\string: 694-706.

\bibitem{Hsieh:TAC:2000}
Hsieh CS. Robust two-stage {K}alman filters for systems with unknown inputs.
  {\it IEEE Transactions on Automatic Control} 2000\string; 45(12)\string:
  2374-2378.

\bibitem{Hsieh:AJC:2010}
Hsieh CS. On the optimality of two-stage {K}alman filtering for systems with
  unknown inputs. {\it Asian Journal of Control} 2010\string; 12(4)\string:
  510-523.

\bibitem{Pina:AUTO:2006}
Pina L, Botto MA. Simultaneous state and input estimation of hybrid systems
  with unknown inputs. {\it Automatica} 2006\string; 42(5)\string: 755-762.

\bibitem{You:AUTOSinica:2008}
You F, Wang F, Guan S. Hybrid estimation of state and input for linear discrete
  time-varying systems: A game theory approach. {\it Acta Automatica Sinica}
  2008\string; 34(6)\string: 665-669.

\bibitem{Floquet:IJACSP:2007}
Floquet T, Edwards C, Spurgeon SK. On sliding mode observers for systems with
  unknown inputs. {\it International Journal of Adaptive Control and Signal
  Processing} 2007\string; 21(8-9)\string: 638-656.

\bibitem{Bejarano:IJRNC:2007}
Bejarano FJ, Fridman L, Poznyak A. Exact state estimation for linear systems
  with unknown inputs based on hierarchical super-twisting algorithm. {\it
  International Journal of Robust and Nonlinear Control} 2007\string;
  17(18)\string: 1734-1753.

\bibitem{Gillijns:AUTO:2007:a}
Gillijns S, {De Moor} B. Unbiased minimum-variance input and state estimation
  for linear discrete-time systems. {\it Automatica} 2007\string; 43(1)\string:
  111-116.

\bibitem{Fang:IJACSP:2011}
Fang H, Shi Y, Yi J. On stable simultaneous input and state estimation for
  discrete-time linear systems. {\it International Journal of Adaptive Control
  and Signal Processing} 2011\string; 25(8)\string: 671-686.

\bibitem{Fang:AUTO:2012}
Fang H, {De Callafon} RA. On the asymptotic stability of minimum-variance
  unbiased input and state estimation. {\it Automatica} 2012\string;
  48(12)\string: 3183-3186.

\bibitem{Yong:AUTO:2016}
Yong SZ, Zhu M, Frazzoli E. A unified filter for simultaneous input and state
  estimation of linear discrete-time stochastic systems. {\it Automatica}
  2016\string; 63\string: 321-329.

\bibitem{Shi:AUTO:2016}
Shi D, Chen T, Darouach M. Event-based state estimation of linear dynamic
  systems with unknown exogenous inputs. {\it Automatica} 2016\string;
  69\string: 275-288.

\bibitem{Fang:AUTO:2013}
Fang H, {De Callafon} RA, Cort\'{e}s J. Simultaneous input and state estimation
  for nonlinear systems with applications to flow field estimation. {\it
  Automatica} 2013\string; 49(9)\string: 2805-2812.

\bibitem{Fang:ACSP:2015}
Fang H, {De Callafon} RA, Franks PJS. Smoothed estimation of unknown inputs and
  states in dynamic systems with application to oceanic flow field
  reconstruction. {\it International Journal of Adaptive Control and Signal
  Processing} 2015\string; 29(10)\string: 1224-1242.

\bibitem{Fang:CEP:2017}
Fang H, Srivas T, {De Callafon} RA, Haile MA. Ensemble-based simultaneous input
  and state estimation for nonlinear dynamic systems with application to
  wildfire data assimilation. {\it Control Engineering Practice} 2017\string;
  63\string: 104-115.

\bibitem{Fang:JAS:2018}
{Fang} H, {Tian} N, {Wang} Y, {Zhou} M, {Haile} MA. Nonlinear {Bayesian}
  estimation: from Kalman filtering to a broader horizon. {\it IEEE/CAA Journal
  of Automatica Sinica} 2018\string; 5(2)\string: 401-417.

\bibitem{Li-Jilkov:2005}
Li XR, Jilkov VP. Survey of maneuvering target tracking. {P}art
  \uppercase\expandafter{\romannumeral5}: Multiple-model methods. {\it IEEE
  Transactions on Aerospace and Electronic Systems} 2005\string; 41(4)\string:
  1255-1321.

\bibitem{Xu:TAES:2016}
Xu L, Li XR, Duan Z. Hybrid grid multiple-model estimation with application to
  maneuvering target tracking. {\it IEEE Transactions on Aerospace and
  Electronic Systems} 2016\string; 52(1)\string: 122-136.

\bibitem{Vidal:CDC:2002}
Vidal R, Chiuso A, Soatto S. Observability and identifiability of jump linear
  systems. {\it Proceedings of the 41st IEEE Conference on Decision and
  Control} 2002\string; 4\string: 3614-3619.

\bibitem{Chen:OUP:2012}
Chen CT. {\it {L}inear {S}ystem {T}heory and {D}esign}.
\newblock New York, NY: Oxford University Press.
\newblock 4th~ed. 1999.

\bibitem{Anderson:1979}
Anderson BD, Moore JB. {\it {O}ptimal {F}iltering}.
\newblock Englewood Cliffs, NJ: Prentice-Hall.
\newblock 1st~ed. 1979.

\bibitem{Ito-TAC:2000}
Ito K, Xiong K. Gaussian filters for nonlinear filtering problems. {\it IEEE
  Transactions on Automatic Control} 2000\string; 45(5)\string: 910-927.

\bibitem{Sarkka:CUP:2013}
Sarkka S. {\it Bayesian Filtering and Smoothing}.
\newblock Cambridge: Cambridge University Press.
\newblock 1st~ed. 2013.

\bibitem{Horn:CUP:2013}
Horn RA, Johnson CR. {\it Topics in Matrix Analysis}.
\newblock Cambridge: Cambridge University Press.
\newblock 2nd~ed. 2013.

\bibitem{Kullback}
Kullback S. {\it {I}nformation Theory and {S}tatistics}.
\newblock New York, NY: Wiley.
\newblock 1st~ed. 1959.

\bibitem{Doob}
Doob JL. {\it {S}tochastic {P}rocesses}.
\newblock New York, NY: John Wiley \& Sons.
\newblock 1st~ed. 1990.

\end{thebibliography}

\end{document}